\documentclass[11pt,letterpaper]{article}
\usepackage[tmargin=1in,bmargin=1in,lmargin=1in,rmargin=1in]{geometry}
\usepackage{setspace}
\usepackage{authblk}
\usepackage{gensymb}
\setstretch{2}
\usepackage{diagbox}
\usepackage{xcolor}
\usepackage{epsfig}
\usepackage{cite}
\usepackage{graphicx}
\usepackage{amsmath, mathtools}
\usepackage{ragged2e}
\usepackage{amsthm}
\usepackage{amssymb}
\usepackage{tabularx}
\usepackage{wrapfig}  % enables floating wrapped figures
\usepackage{url}
\usepackage{hyperref}
\usepackage{subcaption}
\usepackage{booktabs}
\bibliographystyle{plain}
\theoremstyle{definition}
\newtheorem{definition}{Definition}
\theoremstyle{lemma}
\newtheorem{lemma}{Lemma}
\title{Multistage Economic MPC for Systems with a Cyclic Steady State: A Gas Network Case Study}
\author[1]{Sakshi Naik}
\author[2]{Lavinia Ghilardi}
\author[3] {Robert Parker}
\author[1]{Lorenz T. Biegler}
\affil[1]{Carnegie Mellon University, Pittsburgh, PA 15213, USA}
\affil[2]{Politecnico di Milano, Department of Energy, Milano, 20154, Italy}
\affil[3]{Applied Math and Plasma Physics Group, Los Alamos National Laboratory, Los Alamos, NM, USA
}
\date{February 2025}

\begin{document}

\maketitle
\section*{Abstract}
Multistage model predictive control (MPC) provides a robust control strategy for dynamic systems with uncertainties and a setpoint tracking objective. Moreover, extending MPC to minimize an economic cost instead of tracking a pre-calculated optimal setpoint improves controller performance. In this paper, we develop a formulation for multistage economic MPC which directly minimizes an economic objective function. The multistage economic MPC framework is extended for systems with a cyclic steady state (CSS) and stability is guaranteed by employing a Lyapunov-based stability constraint. The multistage economic MPC framework is validated on two natural gas network case studies to minimize the net energy consumption during gas transmission. In both instances, the multistage economic MPC effectively manages uncertain demands by preventing constraint violations and guides the network to its optimal cyclic operating conditions. The Lyapunov function remains bounded in both instances, validating the robust stability of the controller. 
\section{Introduction}

Model predictive control (MPC) is a moving horizon controller based on optimal decisions for a dynamic process. The MPC recursively solves a constrained optimization problem while updating its initial state with the current state of the plant. The optimal control inputs suggested by the MPC are fed back into the plant to calculate its future state. Nonlinear MPC (NMPC) has been widely studied on industrial applications \cite{franke_integration_2007, nagy_real-time_2007, parker_dynamic_2022} where the controller optimizes a nonlinear dynamic optimization problem at every step to track an optimal setpoint trajectory. Multilayer control strategies (such as NMPC in combination with a real-time optimizer) have been shown to work better by updating the setpoint \cite{de_prada_integration_2017, patron_real-time_2020}; however, they require greater computational effort. An alternative control approach is the economic NMPC (E-NMPC) where an economic objective function is directly minimized/maximized instead of tracking a precalculated setpoint trajectory \cite{ rawlings_fundamentals_2012, skogestad_advanced_2023}. This approach enables the direct optimization of process performance metrics, such as energy efficiency or operational costs, over a finite prediction horizon. 

A nominal MPC controller relies on a good process model with predetermined parameters. However, an ideal controller should be robust to model uncertainty and disturbances in the plant. 
This is especially important for constraint satisfaction and improved performance.
Stochastic as well as robust approaches have been proposed to handle these issues under uncertainty \cite{grossmann_recent_2016}. Moreover, min-max MPC provides robustness by minimizing the worst-case cost \cite{campo_robust_1987, raimondo_min-max_2009}. However, this can lead to overly conservative solutions, as it always minimizes the worst-case cost without considering uncertainty realization in the future. An alternative less conservative approach is the tube-based MPC \cite{mayne_tube-based_2011} where a nominal trajectory is calculated and the MPC forces the disturbed system to lie within a tube that is centered around the nominal trajectory. While the tube-based MPC guarantees stability and constraint satisfaction in the presence of uncertainties, it does not guarantee optimal performance. MPC with back-off constraints has also received attention for its simple and effective implementation \cite{krog_simple_2024}. The main idea is to calculate backoff constraints using Monte Carlo simulations until there are no constraint violations. However, both tube-based MPC and MPC with backoff constraints rely on bound tightening strategies that can lead to overly conservative solutions.

A robust control strategy was introduced by Lucia et al. \cite{lucia_multi-stage_2013} where the evolution of uncertainty is represented using a scenario tree that contains recourse variables for the controls. They show that this multistage NMPC is less conservative and outperforms a nominal min-max NMPC in presence of uncertainty. Zhou and Biegler \cite{zhou_optimal_2021} extended the multistage NMPC framework to a parallelizable advanced-step multistage NMPC. Mdoe et al. \cite{mdoe_adaptive_2021} proposed a computationally efficient adaptive horizon multistage NMPC, where the length of the control horizon is determined using NLP sensitivities. Several other advancements and applications for multistage NMPC can be found here \cite{lucia_potential_2015, thangavel_robust_2020, lin_multistage_2022, lucia_stability_2020}. There is limited literature on multistage E-NMPC \cite{tatulea-codrean_multi-stage_2020, naik_multistage_2023} due to challenges with controller stability when the stage costs are economic functions instead of quadratic tracking terms. 

In this paper, we develop a formulation for multistage E-NMPC for systems with cyclic steady-state. Controller stability is ensured by explicitly enforcing the Lyapunov descent property. We show through theory and numerical experiments that the controller has input to state practical stability. The multistage E-NMPC is demonstrated in two case studies of natural gas pipeline networks with uncertain demands. 

\section{E-NMPC formulation}
An E-NMPC controller solves a constrained optimization problem, directly minimizing an economic objective function. The plant model is then simulated using the first control input obtained from the optimization to calculate the new state of the plant. Then, the initial state of the E-NMPC controller is updated with the new plant state and this process is repeated. Section (\ref{sec:std-enmpc-formulation}) formulates the nominal E-NMPC optimization problem assuming no uncertainty for a system with cyclic steady state. 
\subsection{Nominal E-NMPC}\label{sec:std-enmpc-formulation}
Systems with a cyclic steady state have periodic operation that repeats after each cycle. Let $K$ be the length of the cycle and $N$ be the number of times the cycle repeats in the control horizon. The length of the control horizon is therefore given by $NK$.
The optimization formulation for an E-NMPC employed on a system with cyclic steady states is developed below:
\begin{equation}\label{sec:enmpc-formulation}
    \begin{aligned}
\min \quad & \sum_{i = 0}^{NK-1}l^{ec}(z_i, v_i) \\
\text{s.t.} \quad  & z_{i+1} = f(z_i, v_i)\quad  \forall i \in \{0, .. NK-1\} \\
& z_0 = x_k \\
& z_{(N-1)K} = z_{0}^*  \\
& v_{(N-1)K + i} = v_i^*, \quad \forall i \in \{1, ..K-1\}\\
& V_k - V_{k-1} \leq -\delta l^{tr}(x_{k-1}, u_{k-1}) , \quad \delta \in (0,1]\\
& z_i \in \mathcal{X}, v_i \in \mathcal{U}, \quad \forall i \in \{0, ... NK-1\}
\end{aligned}
\end{equation}
The states and the controls in the E-NMPC are denoted by $z$ and $v$ respectively while the states and the controls in the plant are denoted by $x$ and $u$ respectively. $z^*$, $v^*$ is the optimal CSS determined from (\ref{sec:css-formulation}). The different notations for the variables in the controller and the plant are important to account for any mismatch in the plant and the controller model. 
The E-NMPC minimizes the sum of economic stage costs $l^{ec}$ over the control horizon by optimizing the control variables. 
The current state of the plant at time $k$, denoted by $x_k$, is loaded into the controller at the initial time point of the controller $(i = 0)$. 
An E-NMPC is not necessarily stable since the objective function is not a $\mathcal{K}$ function. Therefore, special measures are necessary to ensure the stability of the E-NMPC. In this case, we employ a Lyapunov-based stability constraint to ensure that the system is stable.
The Lyapunov function $V_k$ for $x_k$ is given by:
\begin{equation}\label{eq: lyapunov-definition}
    V_k = V(x_k) = \sum_{i = 0}^{NK-1}l^{tr}(z_{i|k}, v_{i|k})
\end{equation}
and the tracking cost is defined as:
\begin{equation*}
    l^{tr}(z_i, v_i) = (z_i - z_i^*)^2 + (v_i - v_i^*)^2
\end{equation*}
The stability of a nominal E-NMPC is ensured by enforcing the tracking Lyapunov function to strictly decrease \cite{griffith_advances_nodate}. The parameter $\delta \in (0,1]$ controls the rate of descent of the Lyapunov function.
The cyclic steady-state constraints on the states and the controls drive the controller to the cyclic steady state in the last period. The optimal cyclic steady state $(z_i^*, v_i^*)$ is pre-computed by solving the following optimization problem \cite{huang_stability_2011}:

\begin{equation}\label{sec:css-formulation}
    \begin{aligned}
\min \quad & \sum_{i = 0}^{K-1}l^{ec}(z_i, v_i) \\
\text{s.t.} \quad  & z_{i+1} = f(z_i, v_i)\quad  \forall i \in \{0, .. K-1\} \\
& z_0 = z_K \\
& z_i \in \mathcal{X}, v \in \mathcal{U}, \quad \forall i \in \{0, ... K-1\}
\end{aligned}
\end{equation}
The nominal E-NMPC formulation does not account for uncertainties in the model parameters. In the presence of uncertainty in the model parameters, a nominal E-NMPC described in equation set (\ref{sec:enmpc-formulation}) is not enough to satisfy all the constraints. The performance of a nominal E-NMPC in the presence of uncertain parameters in the plant is shown in sections (\ref{sec:non-vs-multistage-test-network}, \ref{sec: std-enmpc-unc-demands}).

\subsection{Multistage Economic NMPC}\label{sec:multistage-enmpc}
Lucia et al. \cite{lucia_multi-stage_2013} introduced the multistage NMPC as a robust control strategy in presence of uncertainty. A multistage NMPC explicitly accounts for uncertainties in model parameters and their evolution in the time horizon. Two main assumptions are made in the multistage NMPC framework:
\begin{enumerate}
    \item The uncertain parameters follow a discrete probability distribution with a finite set of parameter values
    \item The uncertain parameters are exogenous and the uncertainty is resolved before the next time step. 
\end{enumerate}

 A multistage E-NMPC minimizes a weighted expected cost function across multiple uncertain scenarios to make robust decisions. Following the approach in \cite{lucia_multi-stage_2013}, Figure (\ref{fig:multistage-tree}) shows a typical fully-branched scenario tree in a multistage E-NMPC with an uncertain parameter with three discrete realizations of the uncertain parameter. $z_i^c$, $v_i^c$, $w_i^c$ represent the state, controls and uncertain parameter realization at time $i$ in a scenario $c$.
\begin{figure}[h]
    \centering
    \includegraphics[width=0.8
\linewidth]{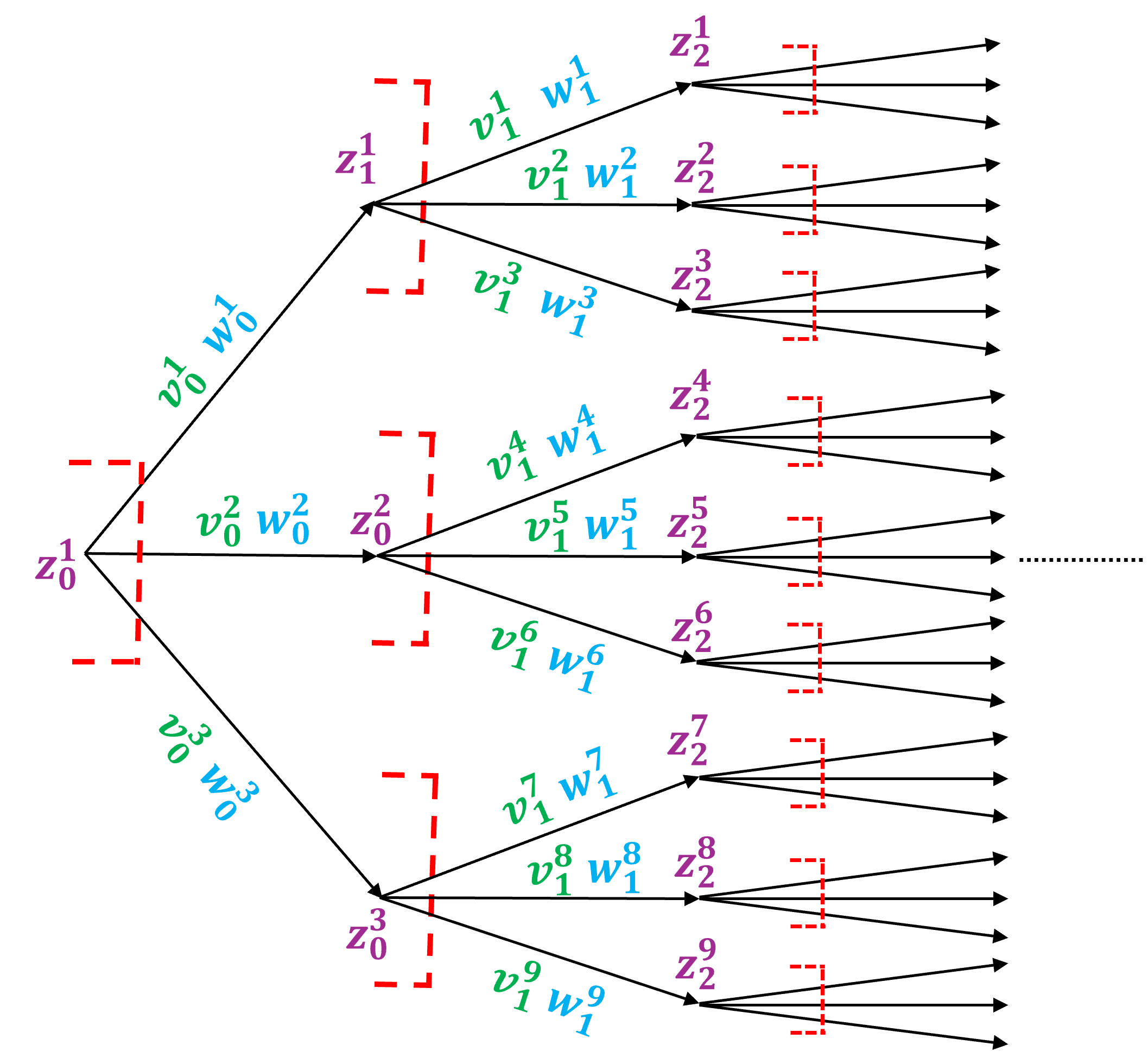}
    \caption{Fully branched multistage NMPC controller with states $z$, controls $v$ and uncertain parameter $w$. The red dotted lines indicate non-anticipativity constraints on the control variables}
    \label{fig:multistage-tree}
\end{figure}
A ``scenario'' is defined as the collection of all uncertainty realizations over a particular controller horizon, starting from the root node i.e. the current state of the plant to a leaf node. For example in Figure  (\ref{fig:multistage-tree}), scenario 1 is given by $\{w_0^1, w_1^1, w_2^1, ..., w_{H-1}^1\}$, scenario 2 is given by $\{w_0^1, w_1^1, w_2^1, ..., w_{H-1}^2\}$ and so on. For a controller with horizon $H$, and $\mathcal{|M|}$ discrete samples (e.g. maximum, nominal and minimum) of the uncertain parameters in each period, the number of scenarios grows as $\mathcal{|M|}^{H}$ in the multistage controller. 

Since the value of the uncertain parameter $w_i$ is not realized in advance, the control inputs $v_i^c$ must be equal in all edges that emanate from a common state $z_{i}^c$. This condition is enforced in the controller by the non-anticipativity constraints in the multistage controller. For example, in Figure (\ref{fig:multistage-tree}), $v_1^1$ and $v_1^2$ must be the same but $v_1^1$ and $v_1^4$ can be different. As the number of scenarios increase and the controller horizon becomes longer, the problem becomes intractable due to increase in problem size and non-anticipativity constraints. 

Therefore, a multistage controller with a robust horizon is proposed in the literature \cite{lucia_multi-stage_2013}. It assumes that the uncertain parameters only need to be handled within the length of the robust horizon while they can be considered to be at a constant level of uncertainty beyond the robust horizon. Since controller only implements the first control action in the plant before it re-optimizes, the robust horizon approach is sufficient to handle uncertainty without solving the full scenario tree. Figure (\ref{fig:multistage-robust-horizon}) shows a multistage E-NMPC with a robust horizon of 1. For a controller with a robust horizon $H_r$, and $|\mathcal{M}|$ discrete realizations of the uncertain parameter, the number of scenarios grows as $|\mathcal{M}|^{H_r}$. 
This is a significant reduction in the number of scenarios and non-anticipativity constraints compared to the fully branched multistage E-NMPC when $H_r << H$. 
In the past, researchers have considered multistage NMPC with short robust horizons of length 1 and 2 \cite{lucia_multi-stage_2013, yu_advanced-step_2019, naik_multistage_2023, lin_multistage_2022} leading to a fast tractable controller with very good results. A mathematical formulation for the multistage E-NMPC with a robust horizon is given in (\ref{sec:multistage-formulation}). 
\begin{figure}[h]
    \centering
    \includegraphics[width=0.7
\linewidth]{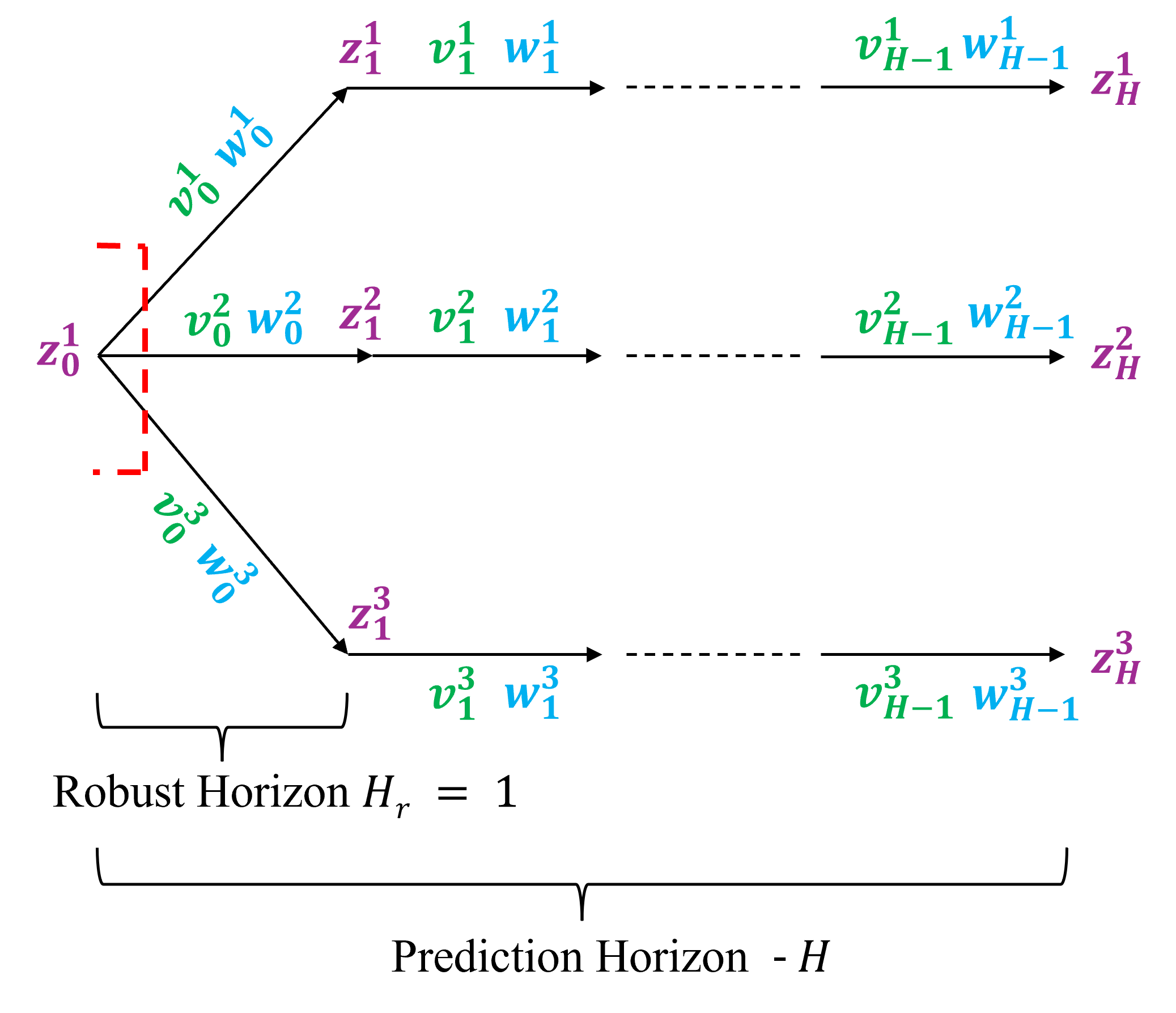}
    \caption{Multistage NMPC controller with a robust horizon of length 1. The red dotted lines indicate non-anticipativity constraints on the control variables within the robust horizon}
    \label{fig:multistage-robust-horizon}
\end{figure}
\begin{equation}\label{sec:multistage-formulation}
    \begin{aligned}
\min \quad &\mathbb{E}_{\mathbb{C}}\Bigg[\sum_{i = 0}^{H-1}l^{ec}(z_i^c, v_i^c, w_i^c)\Bigg]  + \lambda \xi\\
\text{s.t.} \quad  & z_{i+1}^c = f(z_i^c, v_i^c, w_i^c)\quad  \forall i \in \{0, .. H-1\}, \forall c \in \mathbb{C} \\
& v_i^c = v_i^{\bar{c}}, \textit{ if } z_{i}^c = z_{i}^{\bar{c}} \quad \forall i \in \{1, .., H_r\}, \forall c, \bar{c} \in \mathbb{C}\\
& w_i^c = w_i^{\bar{c}} \quad \forall i \in \{H_r+1, ...,H-1\}, \forall c, \bar{c} \in \mathbb{C} \\
& z_0 = x_k \\
& z_{(N-1)K}^c = z_{0}^{c*}  \quad \forall c \in \mathbb{C}\\
& v_{(N-1)K + i}^c = v_i^{c*}\quad \forall i \in \{1, ..K-1\},  \forall c \in \mathbb{C}\\
& \mathbb{E}_\mathbb{C}[(V_k^c-V_{k-1}^c)]\leq -\delta \mathbb{E}_\mathbb{C}[l^{tr}(x_{k-1}^c, u_{k-1}^c)]  + \xi, \quad \delta \in (0,1]\\
& \xi \geq 0 \\
& z_i^c \in \mathcal{X}, v_i^c \in \mathcal{U}, \quad \forall i \in \{0, ... H-1\},  \forall c \in \mathbb{C}
\end{aligned}
\end{equation}
$\mathbb{C}$ is the set of all scenarios in the multistage E-NMPC. The expected value over the set of all scenarios $\mathbb{E}_\mathbb{C}(.)$ is defined as follows: 
\begin{equation}
    \mathbb{E}_\mathbb{C}\Bigg[\sum_{i = 0}^{H-1}l^{ec}(z_i^c, v_i^c, w_i^c)\Bigg] = \sum_{c\in \mathbb{C}}p^c\sum_{i = 0}^{H-1}l^{ec}(z_i^c, v_i^c, w_i^c)
\end{equation}
\begin{equation}\label{eq: expected-lyapunov}
    \mathbb{E}_\mathbb{C}[V_k^c] = \sum_{c\in \mathbb{C}}p^cV_k^c
\end{equation}
\begin{equation}
    \mathbb{E}_\mathbb{C}[l^{tr}(x_{k-1}^c, u_{k-1}^c)] = \sum_{c\in \mathbb{C}}p^cl^{tr}(x_{k-1}^c, u_{k-1}^c)
\end{equation}
$p^c$ is the probability of occurrence of scenario $c$. In this case, each scenario is assigned an equal probability of occurence. The stability constraint from formulation (\ref{sec:enmpc-formulation}) is modified to bound the expected value of the Lyapunov function over the different scenarios \cite{yu_advanced-step_2019}. $V_k^c$ is the Lyapunov function of scenario $c$ and $\xi$ is the slack variable in the stability constraint. The slack variable $\xi$, weighted by $\lambda$, that relaxes the descent condition is also minimized in the objective function to maintain it to be close to zero whenever possible. 
% It is known that at each state $z_{i}^c$, the uncertain parameter $w_{i}$ an take one of the $\mathcal{M}$ uncertainty realizations. Each branch that has sis weighed equally assuming there is equal probability of each uncertainty realization.
% Therefore the weights $p_i^c$ satisfy:
% \begin{equation}
%     \sum_{j \in \mathcal{M}}p_i^j = 1, \quad \forall i \in \{0 , .. H-1\}
% \end{equation}
% For each scenario, the weight $p^c$ is given by a product of the weights of the branches that constitute the scenario.
% \begin{equation}
%     p^c = \prod_{i = 0}^{H-1}p_i^{j_c} \quad  \forall c \in \mathbb{C}
% \end{equation}
% Where $j_c$ consists of all branches that constitute the scenario $c$ and $\mathbb{C}$ is the set of all scenarios. 
The cyclic steady state constraints are written separately for each scenario of the multistage E-NMPC to drive each branch of the multi stage E-NMPC to it's own optimal CSS. An offline calculation is done to calculate the optimal cyclic steady state for each uncertain scenario. 

\section{Stability Properties}
This section establishes the Input to State practical Stability (ISpS) for the multistage E-NMPC. Many definitions are taken from \cite{lucia_stability_2020, yu_advanced-step_2019} where the authors prove ISpS stability of the multistage-NMPC with tracking objective, as many of these concepts directly apply to the multistage-NMPC with economic objective.
The open loop dynamic system to be controlled is given by:
\begin{equation}\label{eq: open-loop-system}
    x_{k+1} = f(x_k, u_k, d_k)
\end{equation}
The system is controllable with a feedback control law given by $u_k = h(x_k)$. Therefore the closed loop system can be written as:
\begin{equation} \label{eq: closed-loop-system}
    x_{k+1} = F(x_k, d_k)
\end{equation}
First, we provide some definitions required to prove the stability properties.

\begin{definition}[$\mathcal{K}$ function]
A function $\alpha(.) : \mathbb{R} \rightarrow \mathbb{R}$ is a $\mathcal{K}$ function if it is continuous, strictly increasing and satisfies $\alpha(0) = 0$ and $\alpha(s) > 0, \forall s>0$.
\end{definition}

\begin{definition}[$\mathcal{KL}$ function] 
A function $\beta(. , .): \mathbb{R}\times \mathbb{Z} \rightarrow \mathbb{R}$ is a $\mathcal{KL}$ function if $\beta(s, k) $ is a $\mathcal{K}$ function in $s$, for each $s \geq 0$, satisfies $\beta(s, .)$ is non-increasing, and $ \beta(s, k) \rightarrow 0$ as $k \rightarrow \infty$.
\end{definition}

\begin{definition}[$\mathcal{K}_\infty$ function]
A function $\alpha(.) : \mathbb{R} \rightarrow \mathbb{R}$ is a $\mathcal{K}_\infty$ function if it is a $\mathcal{K}$ function and as $s\rightarrow \infty$, $\alpha(s) \rightarrow \infty$.
\end{definition}

%\begin{definition}[Control invariant set] For a dynamic system defined as $x_{k+1} = f(x_k, u_k, d_k)$, the set $\Omega \subseteq \mathbb{X}$ is control invariant if $\forall x \in \Omega, \exists$ a $u \in \mathbb{U}$ such that  $x_{k+1} \in \Omega$ \cite{yu_advanced-step_2019}.
%\end{definition}

\begin{definition}[Robust positive invariant (RPI)] For the system in (\ref{eq: closed-loop-system}), the set $\Omega \subseteq \mathbb{X}$ is robust positive invariant with respect to $\mathbb{D}$ if it contains the origin and if $F(x, d) \in \Omega$, $\forall x \in \Omega$ and $\forall d \in \mathbb{D}$ \cite{rawlings_model_2017}.
\end{definition}

\begin{definition}[ISpS Lyapunov function] A function $V(.)$ is said to be an ISpS Lyapunov function for the system in (\ref{eq: closed-loop-system}), if there exists an RPI set $\Lambda$ containing the origin, a set $\Theta \subseteq \Lambda$ containing the origin, $\mathcal{K}_\infty$ functions $\alpha_1, \alpha_2, \alpha_3$, a $\mathcal{K}$ function $\sigma$, and constants $\tilde{c}_0, \tilde{c}_1 \geq 0$ such that:
\begin{subequations}\label{eq: lyapunov-conditions} 
\begin{equation}
    V(x) \geq \alpha_1(x) \quad \forall x \in \Lambda
\end{equation}
\begin{equation}
    V(x) \leq \alpha_2(x) + \tilde{c}_0 \quad \forall x \in \Theta
\end{equation}
\begin{equation} \label{eq: lyapunov-descent}
    \Delta V(x) = V(F(x, d)) - V(x) \leq -\alpha_3(|x|) + \sigma(|d|) + \tilde{c}_1 \quad \forall x \in \Lambda, \forall d \in \mathbb{D}
\end{equation}
\end{subequations}  
\end{definition}

\begin{definition}[ISpS stability] 
Given a RPI set $\Lambda \in \mathbb{R}^n$ containing the origin, a closed-loop system given by (\ref{eq: closed-loop-system}) is said to be ISpS stable in $\Lambda$ with respect to the origin if there exists a $\mathcal{KL}$ function $\beta$, a $\mathcal{K}$ function $\gamma$ and a constant $\tilde{c}\geq 0$ such that:
\begin{equation}\label{eq: isps-stability}
    |x_k| \leq \beta(|x_0|, k) + \gamma(|d_{[0, k-1]}|) + \tilde{c} \quad \forall x_0 \in \Lambda, \forall k \geq 0, \forall d \in \mathbb{D} 
\end{equation}
where $d_{[0, k-1]}$ indicates the realizations of uncertainty from time 0 to time $k-1$. If equation (\ref{eq: isps-stability}) holds for $\tilde{c} = 0$, then the system (\ref{eq: closed-loop-system}) is input to state stable (ISS) in $\Lambda$.
\end{definition}
A system admitting a Lyapunov function $V$, satisfying equation (\ref{eq: lyapunov-descent}) is ISpS stable \cite{lucia_stability_2020}. Usually the Lyapunov function for an E-NMPC is assumed to be the sum of the tracking objective as shown in (\ref{eq: lyapunov-definition}), given a setpoint. The strict decrease of the Lyapunov function enforced in (\ref{sec:enmpc-formulation}) is enough to ensure the stability of the nominal E-NMPC \cite{griffith_advances_nodate}.
For the multistage E-NMPC, the expected value of the Lyapunov function is enforced to decrease but with a slack to allow the system to be feasible. To show that the multistage E-NMPC with the modified stability constraint is stable, we must show that $\mathbb{E}_\mathbb{C}(V)$ is an ISpS Lyapunov function.

\begin{lemma}[] Given ISpS Lyapunov functions $V^c$ on the RPI set $\Lambda$, $\mathbb{E}_\mathbb{C}(V)$ defined in (\ref{eq: expected-lyapunov}) is an ISpS Lyapunov function in $\Lambda$.
\end{lemma}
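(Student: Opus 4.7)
The plan is to verify the three defining inequalities of an ISpS Lyapunov function directly for $\mathbb{E}_\mathbb{C}(V) = \sum_{c \in \mathbb{C}} p^c V^c$, by taking the scenario-wise bounds that each $V^c$ already provides and assembling them through the same convex combination used in the expectation. Since the probability weights $p^c$ are nonnegative and sum to one, the resulting composite functions inherit the right class properties, and the argument reduces to an application of linearity.

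First, I would write out the hypothesis: for every $c \in \mathbb{C}$, there exist $\mathcal{K}_\infty$ functions $\alpha_1^c, \alpha_2^c, \alpha_3^c$, a $\mathcal{K}$ function $\sigma^c$, and constants $\tilde{c}_0^c, \tilde{c}_1^c \geq 0$ such that the three inequalities in (\ref{eq: lyapunov-conditions}) hold for $V^c$ on the common RPI set $\Lambda$ and common set $\Theta \subseteq \Lambda$. Then I would define
\begin{equation*}
\alpha_j(s) := \sum_{c \in \mathbb{C}} p^c \alpha_j^c(s), \quad j = 1,2,3, \qquad \sigma(s) := \sum_{c \in \mathbb{C}} p^c \sigma^c(s),
\end{equation*}
together with $\tilde{c}_0 := \sum_{c} p^c \tilde{c}_0^c$ and $\tilde{c}_1 := \sum_{c} p^c \tilde{c}_1^c$. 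A brief lemma-style check shows that a convex combination of $\mathcal{K}_\infty$ functions is $\mathcal{K}_\infty$ (continuity, strict monotonicity, vanishing at zero, and radial unboundedness are all preserved by positive sums), and likewise a convex combination of $\mathcal{K}$ functions is a $\mathcal{K}$ function. Thus $\alpha_1, \alpha_2, \alpha_3 \in \mathcal{K}_\infty$ and $\sigma \in \mathcal{K}$.

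Next, I would derive each of the three ISpS Lyapunov bounds for $\mathbb{E}_\mathbb{C}(V)$ by summing the per-scenario bounds weighted by $p^c$. The lower bound $\mathbb{E}_\mathbb{C}(V)(x) \geq \alpha_1(|x|)$ on $\Lambda$ and the upper bound $\mathbb{E}_\mathbb{C}(V)(x) \leq \alpha_2(|x|) + \tilde{c}_0$ on $\Theta$ follow immediately from componentwise inequalities and linearity. For the descent inequality, I would apply (\ref{eq: lyapunov-descent}) scenario by scenario with the same realization $d \in \mathbb{D}$ and the shared closed-loop map $F$, obtaining
\begin{equation*}
\mathbb{E}_\mathbb{C}(V)(F(x,d)) - \mathbb{E}_\mathbb{C}(V)(x) \;=\; \sum_{c \in \mathbb{C}} p^c \bigl(V^c(F(x,d)) - V^c(x)\bigr) \;\leq\; -\alpha_3(|x|) + \sigma(|d|) + \tilde{c}_1,
\end{equation*}
valid for all $x \in \Lambda$ and $d \in \mathbb{D}$. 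Together these three inequalities are exactly the conditions of the ISpS Lyapunov definition.

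I do not expect any serious obstacle; the argument is essentially a bookkeeping exercise that exploits linearity of the expectation and the stability of $\mathcal{K}$-type classes under positive linear combinations. The only subtle point I would be careful to articulate is that the comparison functions in the hypothesis may differ across scenarios, so the composite bounds are genuinely scenario-averaged rather than worst-case; this is in fact what makes the multistage formulation less conservative than a min-max approach while still yielding a valid ISpS Lyapunov certificate for the closed loop.
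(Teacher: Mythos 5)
Your proof is correct and follows essentially the same route as the paper: form the probability-weighted convex combinations of the scenario-wise comparison functions, observe that the classes $\mathcal{K}_\infty$ and $\mathcal{K}$ are closed under such combinations, and sum the three per-scenario inequalities to obtain the ISpS Lyapunov conditions for $\mathbb{E}_\mathbb{C}(V)$. Your write-up is in fact slightly more explicit than the paper's (e.g., in stating that the convex combination of the $\mathcal{K}$ functions $\sigma^c$ is again a $\mathcal{K}$ function), but the argument is the same.
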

\begin{proof}
    Since $V^c$ are ISpS Lyapunov functions, they satisfy equations (\ref{eq: lyapunov-conditions}). Therefore, for $\mathcal{K}_\infty$ functions $\alpha_1^c, \alpha_2^c, \alpha_3^c$ and constants $\tilde{c}_0^c, \tilde{c}_1^c \geq 0, \forall c \in \mathbb{C}$:
     \begin{equation*}
        \mathbb{E}_\mathbb{C}(V(x)) \geq \sum_{c \in \mathbb{C}}p^c \alpha_1^c(x) \quad \forall x \in \Lambda
    \end{equation*}
    Since the weighted average of $\mathcal{K}_\infty$ functions is another $\mathcal{K}_\infty$ function, we can write $\sum_{c \in \mathbb{C}} p^c\alpha_1^c(x) = \alpha_1^{avg}(x)$ where $\alpha_1^{avg}$ is a $\mathcal{K}_\infty$ function. Therefore, 
    \begin{subequations} \label{eq: proof-expected-lyapunov}
        \begin{equation}
        \mathbb{E}_\mathbb{C}(V(x)) \geq \alpha_1^{avg}(x) \quad \forall x \in \Lambda
    \end{equation}
    For the upper-bound on the expected value of the Lyapunov function, given a set a set $\Theta \subseteq \Lambda$ containing the origin, 
    \begin{equation*}
        \mathbb{E}_\mathbb{C}(V(x)) \leq \sum_{c \in \mathbb{C}}p^c (\alpha_2^c(x) + \tilde{c}_0^c) \quad \forall x \in \Theta
    \end{equation*}
    \begin{equation}
        \mathbb{E}_\mathbb{C}(V(x)) \leq \alpha_2^{avg}(x) + \tilde{c}_0^{avg} \quad \forall x \in \Theta
    \end{equation}
    Finally, the descent in the expected value of the Lyapunov function is bounded as follows:
    \begin{equation*}
        \Delta \mathbb{E}_\mathbb{C}(V(x)) = \sum_{c \in \mathbb{C}} p^c(V^c(F(x, d)) - V^c(x))\leq \sum_{c \in \mathbb{C}} p^c(-\alpha_3^c(|x|) + \sigma^c(|d|) + \tilde{c}_1^c) \quad \forall x \in \Lambda, \forall d \in  \mathbb{D}
    \end{equation*}
    \begin{equation}
        \Delta \mathbb{E}_\mathbb{C}(V(x)) \leq -\alpha_3^{avg}(|x|) + \sigma^{avg}(|d|) + \tilde{c}_1^{avg}
    \end{equation}
    From equations (\ref{eq: proof-expected-lyapunov}), it is proved that $\mathbb{E}_\mathbb{C}(V)$ is an ISpS Lyapunov function.
    \end{subequations}
\end{proof}
Therefore, the multistage-ENMPC with the modified stability constraint has input to state practical stability. 

\section{Gas Network Model}
We demonstrate the multistage E-NMPC framework on a gas network model with uncertain parameters.
The gas network model consists of gas transport equations and mass balances across the network. Additionally, it also consists of compressor equations, thermodynamic pressure temperature relationships, and operational constraints. The  model follows from Ghilardi et. al.  \cite{ghilardi_economic_2025} with the main modeling equations listed in Table (\ref{tab:gas-network-equations}) and the nomenclature is given in Table (\ref{tab:gas-network-nomenclature}). The time domain $\mathcal{T}$ is discretized using the backward Euler method and a finite volume method is used to discretize the space domain in the pipes defined by the sets $\epsilon^{in}$ and $\epsilon^{out}$. The system is assumed to be isothermal. 
\begin{table}[h]
    \centering
    \caption{Gas network model}
    \begin{tabular}{c|c}
         Description & Equation \\
         \hline \hline \\
          Mass balance & $\frac{\partial\rho}{\partial t} + \frac{\partial (\rho s)}{\partial l} = 0$ \\ [0.2cm]
          Momentum balance & $\frac{\partial p}{\partial l} +\frac{c_f}{2D}\rho |s|s = 0$\\[0.2cm]
          Equation of state & $p = Z\rho \frac{R}{MW}T$ \\[0.2cm]
         \hline\hline \\
          Nodal mass balance & $\sum_{e \in \epsilon_n^{in}}f_{e,t}^{out} + f_{n,t}^{supply} = \sum_{e \in \epsilon_n^{out}}f_{e,t}^{in} + f_{n,t}^{cons} \quad \forall n \in \mathcal{N}, t \in \mathcal{T}$\\[0.2cm]
          Nodal pressure limits  & $\Tilde{p}_n^{min} \leq p_{n,t} \leq \Tilde{p}_n^{max}  \quad \forall n \in \mathcal{N}, t \in \mathcal{T}$ \\[0.2 cm]
          \hline\hline \\
          Compressor mass balance & $f_{c, t}^{in} = f_{c,t}^{out} \quad \forall c \in \mathcal{C}, t \in \mathcal{T}$\\[0.2cm]
          Compressor boost rule & $\beta_{c,t} = \frac{p_{c,t}^{out}}{p_{c,t}^{in}} \quad \forall c \in \mathcal{C}, t \in \mathcal{T}$ \\[0.2cm]
          Compressor power & $P_{c,t} = \frac{1}{\eta}f^{in}_{c,t}c_pT(\beta_{c,t}^{1-\frac{1}{\gamma}} - 1) \quad \forall c \in \mathcal{C}, t \in \mathcal{T}$ \\[0.2cm]
          Pressure ratio bounds & $\beta_c^{min}\leq\beta_{c,t} \leq \beta_c^{max} \quad \forall c \in \mathcal{C}, t \in \mathcal{T}$      
    \end{tabular}
    \label{tab:gas-network-equations}
\end{table}
\begin{table}[h]
    \centering
    \caption{Nomenclature for the gas network model}
    \begin{tabular}{cc}
         \toprule
         \multicolumn{2}{c}{Sets} \\
         \midrule
         $\mathcal{T}$ & Time \\
         $\mathcal{N}$ & All Nodes in the network \\
         $\mathcal{C}$ & Set of compressors\\
         $\epsilon_n^{in}$ & Set of pipelines with inlet at node $n$\\
         $\epsilon_n^{out}$ & Set of pipelines with outlet at node $n$\\
         \midrule
         \multicolumn{2}{c}{Variables} \\
         \midrule
         $\rho$ & Gas density \\
         $s$ & Gas velocity \\
         $p$ & Gas pressure \\
         $f^{out}_{e,t}$ & Mass flow rate out of edge $e$ at time $t$\\
         $f^{in}_{e,t}$ & Mass flow rate into edge $e$ at time $t$ \\
         $f^{supply}_{n,t}$ & Mass flow rate produced at node $n$ at time $t$ \\
         $f^{in}_{c,t}$ & Mass flow rate entering compressor $c$ at time $t$\\
         $f^{out}_{c,t}$ & Mass flow rate exiting compressor $c$ at time $t$\\
         $p_{n,t}$ & Pressure at node $n$ at time $t$ \\
         $p_{c,t}^{in}$ & Pressure of the gas entering compressor $c$ at time $t$ \\
         $p_{c,t}^{out}$ & Pressure of the gas exiting compressor $c$ at time $t$ \\
         $\beta_{c,t}$ & Compressor pressure ratio in compressor $c$ at time $t$ \\
         \midrule
         \multicolumn{2}{c}{Parameters} \\
         \midrule
         $T$ & Gas temperature \\
         $MW$ & Molecular weight of gas \\
         $Z$ & Gas compressibility factor \\
         $R$ & Ideal gas constant \\
         $f^{cons}_{n,t}$ & Mass flow rate consumed at node $n$ at time $t$ \\
         $\tilde{p}_n^{min}$ & Minimum pressure bound at node $n$\\
         $\tilde{p}_n^{max}$ & Maximum pressure bound at node $n$\\
         $\beta_c^{min}$ & Minimum pressure ratio in compressors \\
         $\beta_c^{max}$ & Maximum pressure ratio in compressors \\
         $\eta$ & Compressor efficiency \\
         $c_p$ & Specific heat capacity of the gas\\          
         \bottomrule
    \end{tabular}
    \label{tab:gas-network-nomenclature}
\end{table}

In this study, the gas consumption $(w^{cons}_{n,t})$ at the sink nodes in the network is assumed to be an uncertain parameter. Compressor power $P$ is the control variable and gas density, speed, pressure, flow rate and compressor pressure ratio $[\rho, s, p, w, \beta]$ are the states. The control objective is to minimize the total energy consumed in the compressors while satisfying the consumer demands. Therefore, the objective function is given by:
\begin{equation}
    l^{ec}_t = \sum_{c \in \mathcal{C}}P_{c,t} \Delta t
\end{equation}
It is important to choose the states and controls on which the terminal constraints are applied in order to not over-specify or under-specify the system (see Example 3.4 in \cite{parker2023dulmage}). The terminal constraint on the states is applied on the intermediate pressures $p$ in the edges $e$. 
\begin{equation} \label{eqn:terminal-state-con}
    p_{e, (N-1)K} = p^{*}_{e, t_0} 
    \quad \forall e \in \epsilon_n^{in}, \forall n \in \mathcal{N}
\end{equation}
The pressures in each pipe at time $t = (N-1)K$ are set equal to the optimal CSS pressures at $t_0$. This constraint ensures that the initial state of the system is the same as the initial optimal CSS. Furthermore, to drive the system to the optimal CSS in the last period, the controls, i.e. the compressor power $P$ at each time point in the last period, are set equal to the optimal CSS controls.
\begin{equation}\label{eqn:terminal-control-con}
    P_{c,t} = P_{c,t}^{*} \quad \forall c \in \mathcal{C}, t\in \mathcal{T}\setminus\{t_0, .. (N-1)K-1\}
\end{equation}
These constraints on the intermediate pressures and compressor power are enough to drive the entire system to optimal CSS in the last period without over-specifying the system. It is worth noting that the control horizon needs to be sufficiently long for the system to be able to reach the terminal region. Strategies to apply terminal constraints when the control horizon is not sufficiently long are mentioned in \cite{wurth_infinite-horizon_2014, durrant-whyte_infinite-horizon_2012}, but are beyond the scope of this paper.

\section{Results}
The multistage E-NMPC formulation is demonstrated on two network topologies, a small test network from \cite{naik_multistage_2023} and a GasLib-40 network from \cite{schmidt_gasliblibrary_2017}. In both problems it is assumed that the demand in the network is cyclic, repeating every 24 hours. The controller horizon is assumed to be 72 hours long with a time discretization of 1 hour.
\subsection{Test network}
\begin{figure}[h]
    \centering
    \includegraphics[width=0.7
    \linewidth]{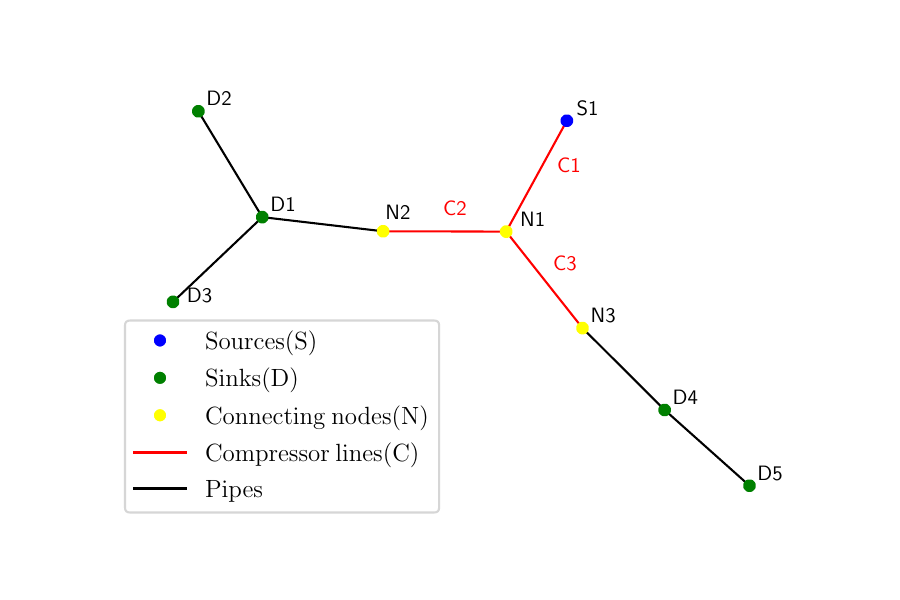}
    \caption{Test network schematic}
    \label{fig:test-network}
\end{figure}
The test network is shown in Figure (\ref{fig:test-network}). It consists of one source node supplying gas at fixed pressure, five sinks that consume gas and three compressor lines that provide boost pressure to the gas. It also consists of three intermediate connecting nodes which do not produce or consume gas. The demand at each sink is assumed to be the same and follows a sinusoidal profile which repeats every 24 hours. The problem is modeled in Pyomo \cite{hart_pyomo_2011} using the MPC extension \cite{parker2023mpc} and solved using the open source nonlinear solver IPOPT \cite{wachter_ipopt_2002,wachter2006ipopt}. We first demonstrate the behavior of a nominal E-NMPC controller in the absence of uncertainty.
\subsubsection{Nominal E-NMPC without any uncertain parameter}
\begin{figure}[h]
\centering
\begin{subfigure}{.5\textwidth}
  \centering
  \includegraphics[width=1\linewidth]{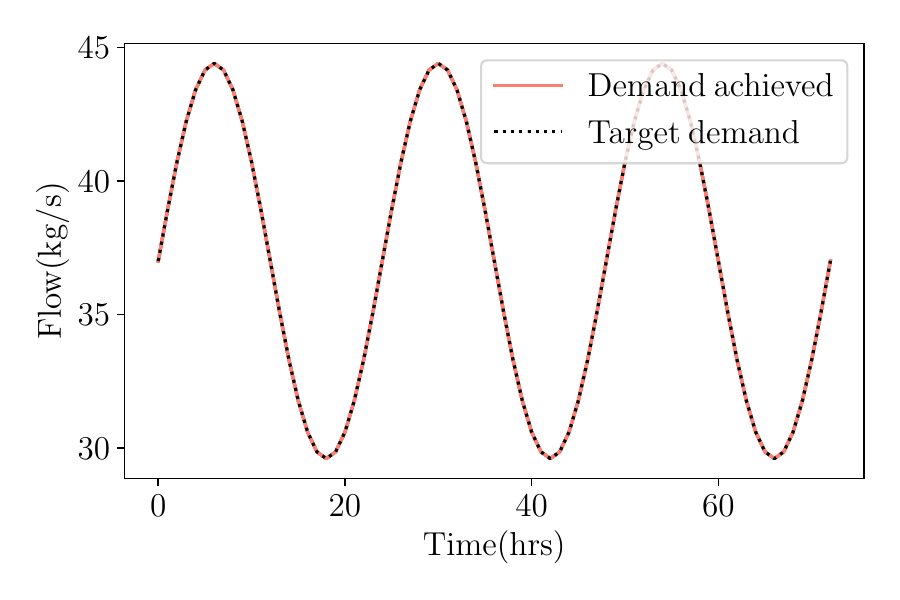}
  \caption{Gas flowrate at sink nodes}
  \label{fig:std-enmpc-no-unc-test-network}
\end{subfigure}%
\begin{subfigure}{.5\textwidth}
  \centering
  \includegraphics[width=1\linewidth]{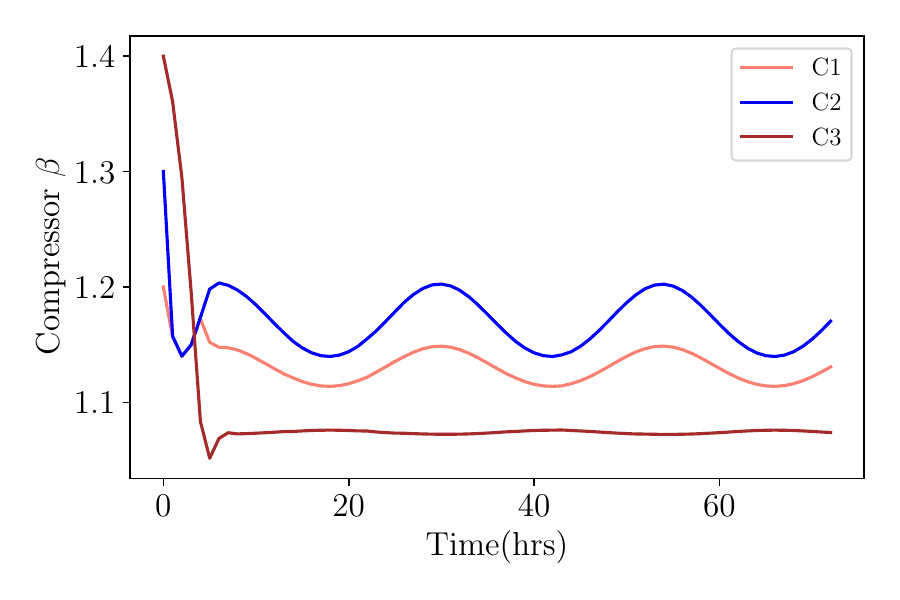}
  \caption{Compressor pressure ratio $\beta = p_{out}/p_{in}$}
  \label{fig:kai-compressor-beta-no-unc}
\end{subfigure}
\caption{Gas flow at sink nodes and the corresponding optimal compressor pressure ratios when there is no uncertain parameter in the test network and the system is controlled using a nominal E-NMPC}
\label{fig:kai-enmpc-no-unc}
\end{figure}
The nominal E-NMPC model consists of 8889 variables and 8763 constraints. Figure (\ref{fig:std-enmpc-no-unc-test-network}) shows that in the absence of uncertainty, a nominal E-NMPC is able to exactly satisfy all sink demands. The corresponding optimal compressor pressure ratios are shown in Figure (\ref{fig:kai-compressor-beta-no-unc}). To minimize energy consumption in the network, the optimal control strategy allows the pressures at the sink nodes $2$ and $5$ to be at their lower bound of 41 bar (Figure (\ref{fig:kai-sink-pressures-no-unc})). The total energy consumed in the network is $7.6 MWh$. 
\begin{figure}[h]
    \centering
    \includegraphics[width=1\linewidth]{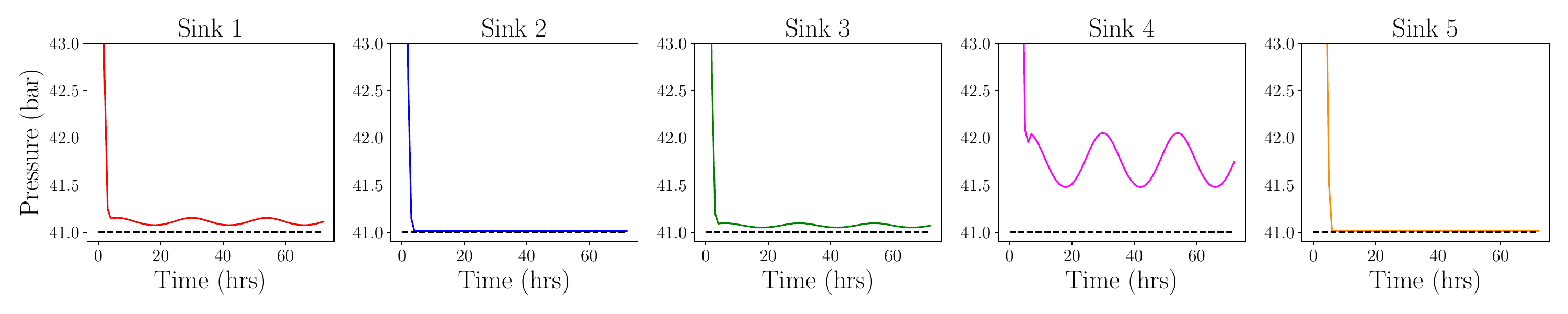}
    \caption{Pressures at sink nodes in the test network, the dashed black line indicates the lower bound}
    \label{fig:kai-sink-pressures-no-unc}
\end{figure}
\begin{figure}[h]
    \centering
    \includegraphics[width=0.5\linewidth]{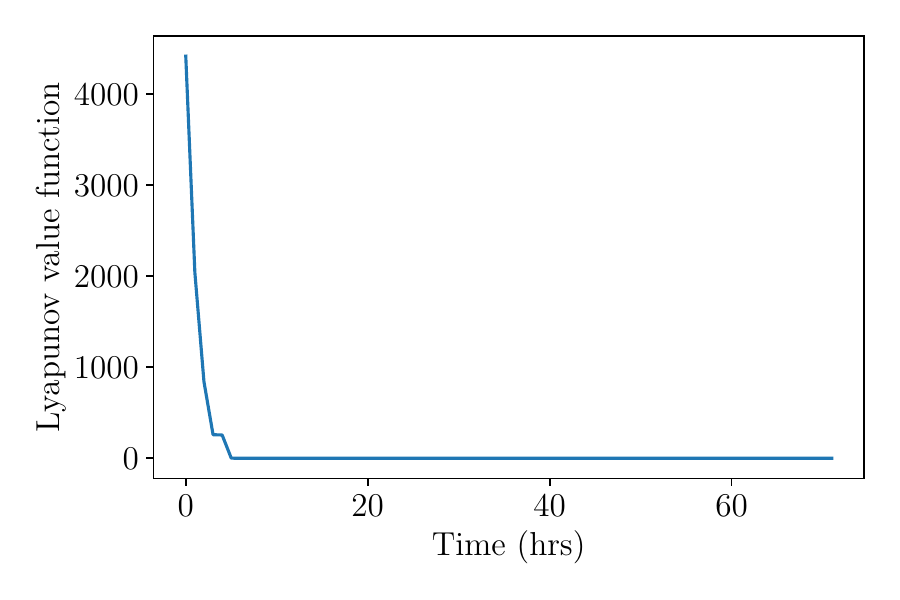}
    \caption{Lyapunov value function when a nominal E-NMPC is used to control the test network in the absence of uncertain parameters in the model}
    \label{fig:kai-lyapunov-func-std-enmpc}
\end{figure}

The nominal E-NMPC is stable, as there are no uncertainties in the model. Figure (\ref{fig:kai-lyapunov-func-std-enmpc}) shows that the Lyapunov function decreases and goes to zero as the system achieves a cyclic steady state. In this simple example, the system goes to cyclic steady state well before the last time period, where the terminal cyclic steady state constraints are enforced.   
\subsubsection{Nominal vs Multistage E-NMPC in presence of uncertain 
parameters}\label{sec:non-vs-multistage-test-network}
\begin{figure}[h]
\centering
\begin{subfigure}{.5\textwidth}
    \centering
    \includegraphics[width=1\linewidth]{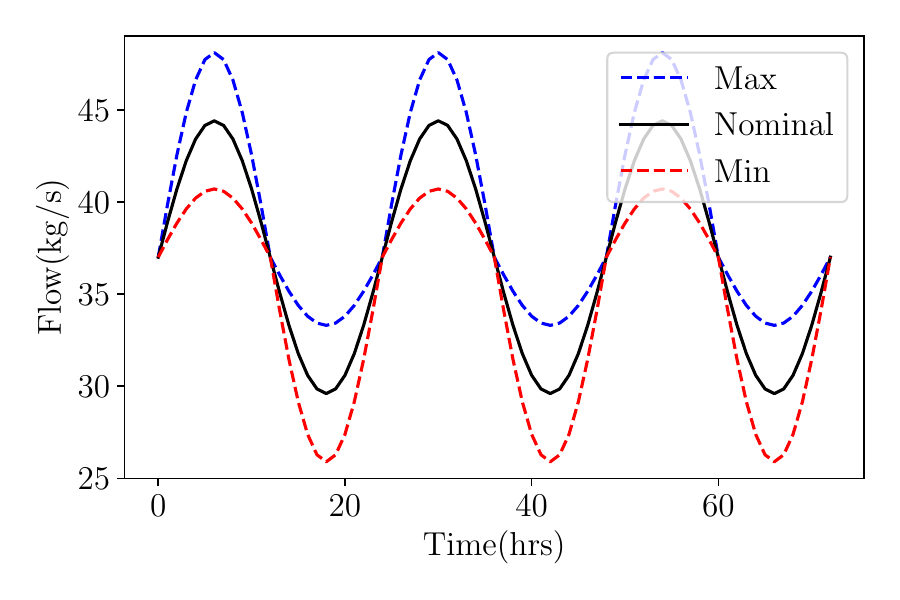}
    \caption{Extreme scenarios}
    \label{fig:kai-extreme-scenarios}
\end{subfigure}%
\begin{subfigure}{.5\textwidth}
  \centering
  \includegraphics[width=1\linewidth]{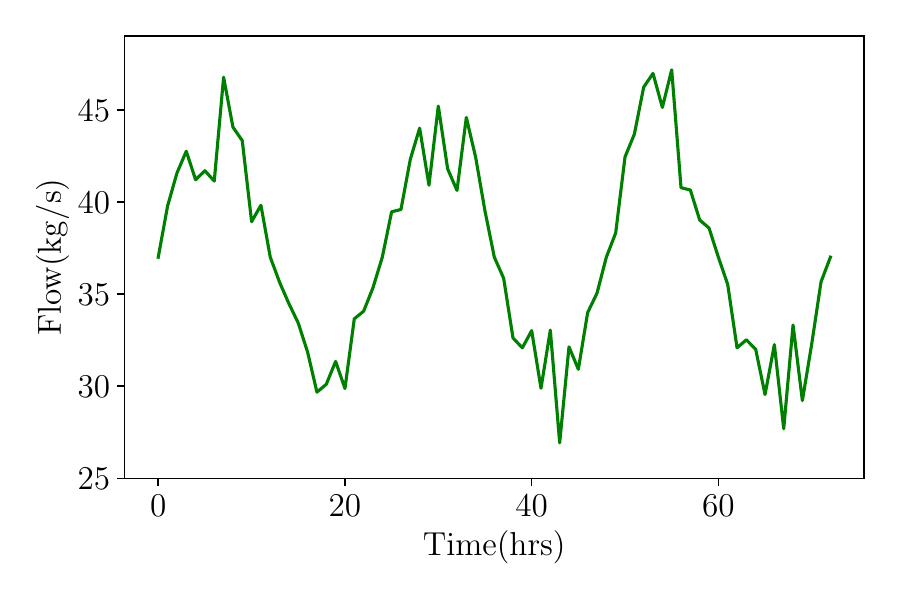}
  \caption{Target demand}
  \label{fig:kai-target-demand}
\end{subfigure}
\caption{Extreme scenarios of demands at sink nodes in the test network and the actual target demand in the test network}
\label{fig:kai-enmpc-no-unc}
\end{figure}
We introduce uncertainty in sink demands in the test network to emulate a more realistic operation. The demand at each sink is allowed to vary between the minimum and maximum demand scenarios shown in Figure (\ref{fig:kai-extreme-scenarios}) while the actual target demand is revealed only in the plant model after the controller computes the robust controls. In this case, the actual target demand, shown in Figure (\ref{fig:kai-target-demand}) is generated by drawing from a uniform distribution.

In this test problem, a nominal E-NMPC meets the sink flow rate requirements exactly using the pipeline gas inventory (linepack). However, it violates pressure bounds at the sinks in the presence of uncertainty in gas flow rates. Figure (\ref{fig:kai-sink-pressure-enmpc-unc}) shows that the lower bound on pressure is violated at sinks 1, 2, 3, and 5. A higher gas flow in the network leads to a greater pressure drop due to friction in the pipes. The nominal E-NMPC does not anticipate the demand uncertainty and therefore violates the pressure limit. The total energy consumed was $7.65 MWh$.
\begin{figure}[h]
    \centering
    \includegraphics[width=1\linewidth]{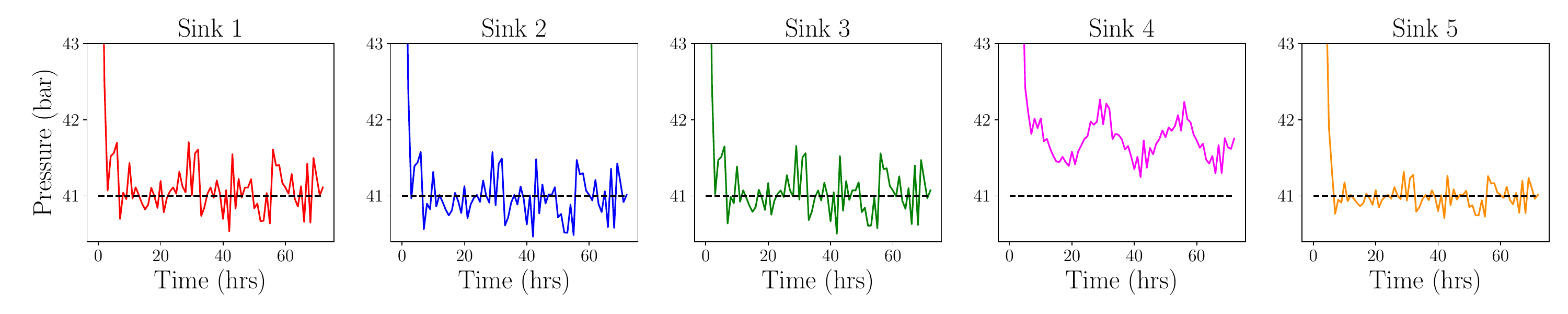}
    \caption{Pressure at sink nodes when a nominal E-NMPC is used to control the test network in presence of demand uncertainty}
    \label{fig:kai-sink-pressure-enmpc-unc}
\end{figure}
\begin{figure}[h]
    \centering
    \includegraphics[width=1\linewidth]{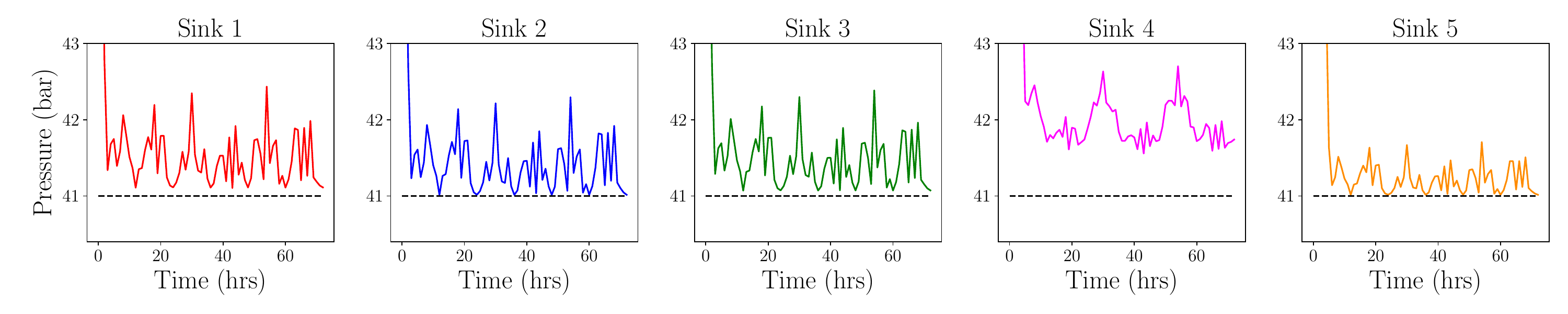}
    \caption{Pressure at sink nodes when a multistage E-NMPC is used to control the test network in presence of demand uncertainty}
    \label{fig:kai-sink-pressures-multistage}
\end{figure}

Instead, a multistage E-NMPC controller is used to control the same system with the three extreme scenarios shown in Figure (\ref{fig:kai-extreme-scenarios}) embedded in the controller. The controller has 26669 variables and 26296 constraints, assuming a robust horizon $H_r = 1$. A multistage E-NMPC satisfies all demands at sink nodes exactly. Additionally, it doesn't violate pressure bounds at the sinks (Figure (\ref{fig:kai-sink-pressures-multistage})), since the extreme scenarios are embedded into the controller. The total energy consumed in the multistage controller is $7.85 MWh$. The slightly higher energy consumption for multistage E-NMPC compared to the nominal E-NMPC is due to the feasibility and robustness of the multistage E-NMPC. The multistage E-NMPC shows input to state practical stability (ISpS). Initially, when the controller is far from the cyclic steady state, the Lyapunov function shows a descent property. However, as we get close to the cyclic steady state, the Lyapunov function has bounded oscillations and does not go to zero due to the uncertainty in the model as shown in Figure (\ref{fig:kai-multistage-lyapunov-stability}). 
\begin{figure}[h]
\centering
\begin{subfigure}{.5\textwidth}
  \centering
  \includegraphics[width=1\linewidth]{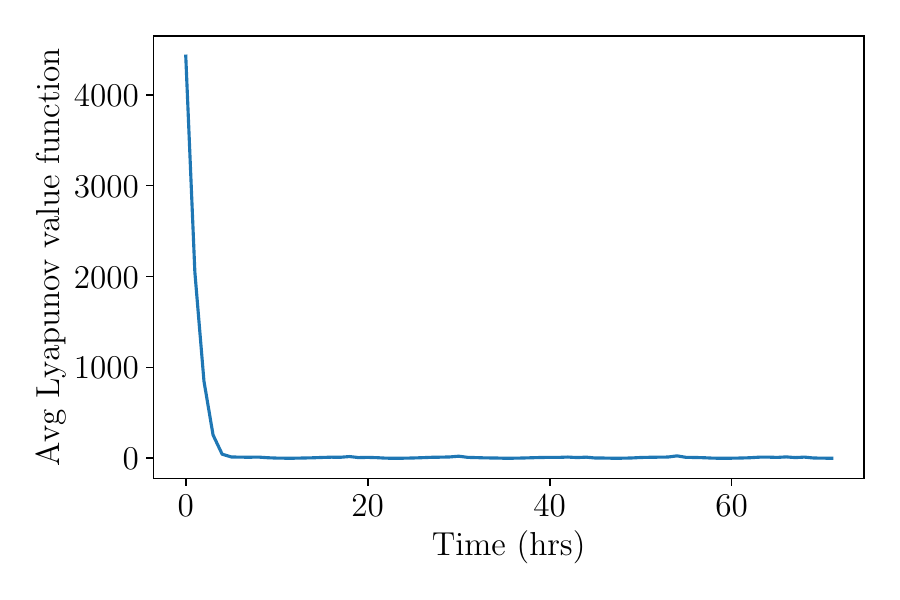}
  \caption{Lyapunov value function}
  \label{fig:kai-multistage-lyapunov-function}
\end{subfigure}%
\begin{subfigure}{.5\textwidth}
  \centering
  \includegraphics[width=1\linewidth]{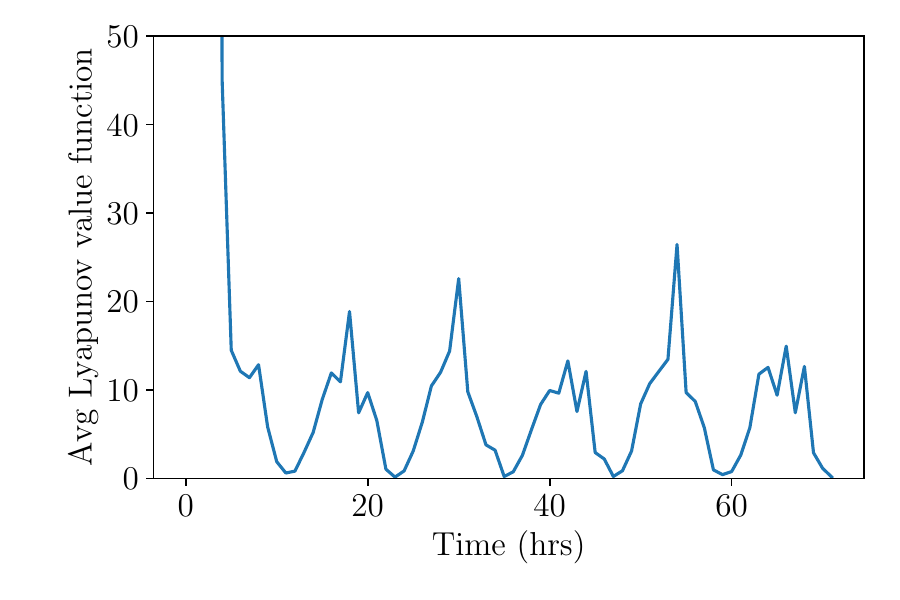}
  \caption{Zoomed in Lyapunov value function of (\ref{fig:kai-multistage-lyapunov-function})}
  \label{fig:kai-multistage-lyapunov-function-zoomed}
\end{subfigure}
\caption{Average Lyapunov value function when a multistage E-NMPC is used to control the test network in presence of demand uncertainty}
\label{fig:kai-multistage-lyapunov-stability}
\end{figure}
\begin{table}[h]
    \centering
    \caption{Results summary for the test network}
    \begin{tabular}{c|ccccccc}
         Controller & Uncertainty & \shortstack{Energy \\ consumption} & \shortstack{Constraint \\ violation} & Variables & Constraints & \shortstack{Average \\ solve-time}\\
         \hline\hline
         Nom E-NMPC & Absent & $7.6 MWh$ & - & 8889 & 8763 & 1.9\\
         Nom E-NMPC & Present & $7.65 MWh$ & Pressure LB & 9984 & 9128 & 2.1 s\\
         MS E-NMPC & Present & $7.85 MWh$ & - & 26669 & 26296 & 7 s
    \end{tabular}
    \label{tab:results-test-network}
\end{table}

The results of the experiments on the test network are summarized in Table (\ref{tab:results-test-network}). The small difference in the number of variables and constraints in the nominal E-NMPC with and without uncertainty is due to the addition of slack variables on the pressure bound at the sinks to retrieve feasibility. In the multistage E-NMPC, the number of variables and constraints are $\approx 3$-fold compared to the nominal E-NMPC. The average solve time of the multistage controller is also $\approx 3.5$ times of the nominal controller. 
\subsection{GasLib-40 network}
\begin{figure}[h]
    \centering
    \includegraphics[width=1\linewidth]{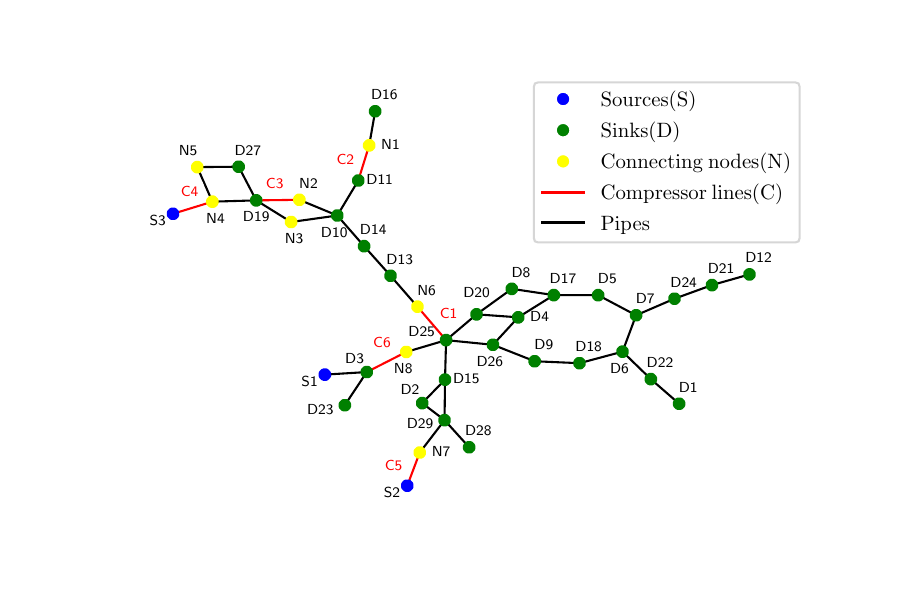}
    \caption{GasLib-40 Network schematic}
    \label{fig:gaslib-40-schematic}
\end{figure}
In this section, we apply the multistage E-NMPC formulation on a larger network with 40 nodes. 
The GasLib-40 network shown in Figure (\ref{fig:gaslib-40-schematic}) consists of 3 sources that supply natural gas either at fixed pressure or fixed flow rate, 29 sinks that consume natural gas, and 8 connecting nodes that do not supply or consume natural gas, simply acting as intermediate nodes in the network. The network has 6 compressors that provide the required boost pressure to the gas, to overcome the pressure loss during gas transport. Sources $S1$ and $S2$ are assumed to be fixed pressure sources and source $S3$ is assumed to have a fixed gas flow rate. Specifying pressure and flow rate both at the source nodes can lead to an ill-defined problem and was studied in detail by Zavala \cite{zavala_stochastic_2014}.
All sink nodes are assumed to have the same sinusoidal cyclic demand profile. 
Additional details about the test problem can be found on the GasLib website: \url{https://gaslib.zib.de/}

\subsubsection{Nominal E-NMPC without any uncertain parameter}
In this section, a nominal E-NMPC is used to optimally control the GasLib-40 network when there is no uncertainty in the model parameters. 
The controller model consists of $65943$ variables and $63661$ constraints. The demand at all the sink nodes is assumed to be cyclic, repeating every 24 hours. The problem is modeled in Pyomo \cite{hart_pyomo_2011} using the MPC extension \cite{parker2023mpc} and solved using the open source nonlinear solver IPOPT \cite{wachter_ipopt_2002,wachter2006ipopt}.

\begin{figure}[h]
\centering
\begin{subfigure}{.5\textwidth}
  \centering
  \includegraphics[width=1\linewidth]{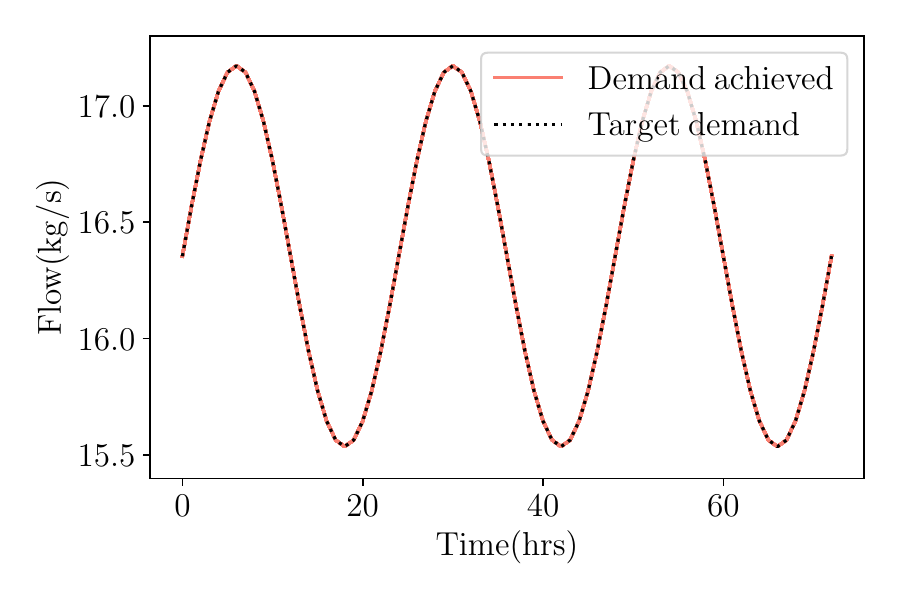}
  \caption{Gas flowrate at sink nodes}
  \label{fig:sink-flow-no-unc}
\end{subfigure}%
\begin{subfigure}{.5\textwidth}
  \centering
  \includegraphics[width=1\linewidth]{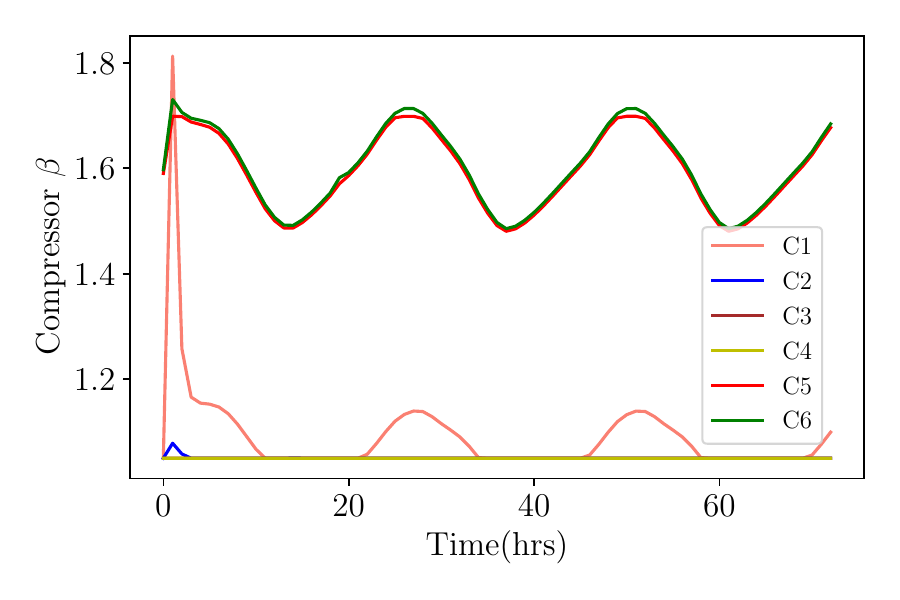}
  \caption{Compressor pressure ratio $\beta = p_{out}/p_{in}$}
  \label{fig:compressor-beta-no-unc}
\end{subfigure}
\caption{Gas demand and the corresponding optimal compressor pressure ratios when there are no uncertain parameters in the network}
\label{fig:enmpc-no-unc}
\end{figure}
Without uncertainty, a nominal E-NMPC is able to satisfy all sink demands exactly as shown in Figure (\ref{fig:sink-flow-no-unc}). Compressors 1, 5 and 6 are active compressors that provide boost pressure to the gas (Figure (\ref{fig:compressor-beta-no-unc})). The total energy consumption to operate the compressors is $2716.45$ \textit{MWh}. The initial increase in the pressure ratio in compressor 1 can be attributed to an initial change in the direction of the gas flow, since the system is initialized at a constant steady-state demand. 

In systems that have a cyclic operation, it is desired to bring the system to the optimal cyclic steady state in the terminal region. The flow rates of the sources $S1$ and $S2$ and the gas pressure of source $S3$ are shown in Figure (\ref{fig:ocss-sources}). Initially, the controller operates away from the cyclic steady state. However, in the last period, between $48-72$ hrs, (beyond the vertical orange line), the system is driven to the optimal cyclic steady state due to the constraints being enforced in the terminal region.
\begin{figure}[h]
    \centering
    \includegraphics[width=1\linewidth]{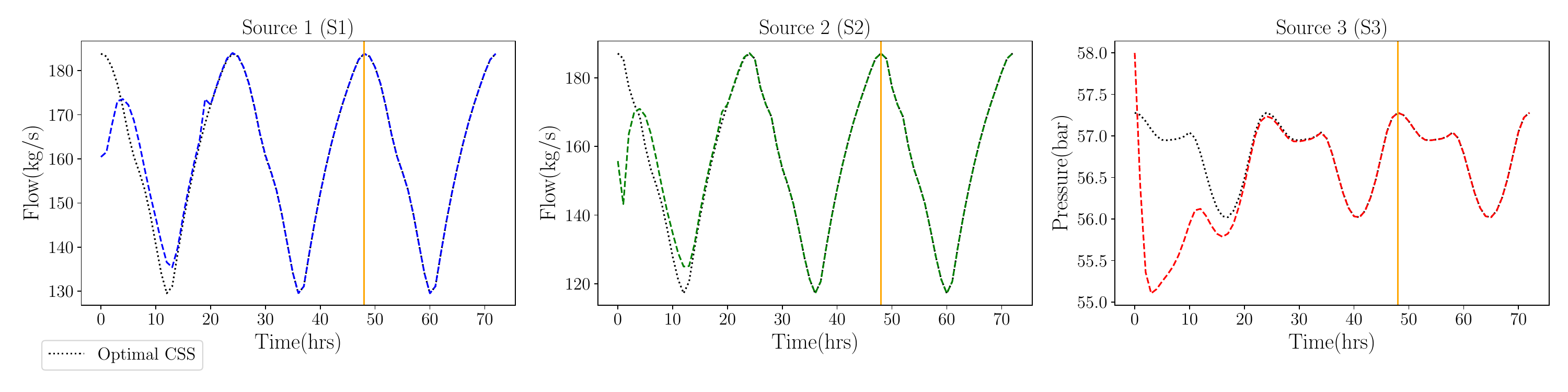}
    \caption{Gas flow rate and gas pressure at the source nodes against the optimal cyclic steady state trajectory. The orange line indicates the start of the last period.}
    \label{fig:ocss-sources}
\end{figure}
The E-NMPC is stable when there are no uncertain parameters in the model. Figure (\ref{fig:lyapunov-value-function-std-enmpc}) shows that the Lyapunov value function has a descent property and eventually converges to the origin indicating the stability of the controller. 

\begin{figure}[h]
    \centering
    \includegraphics[width=0.5\linewidth]{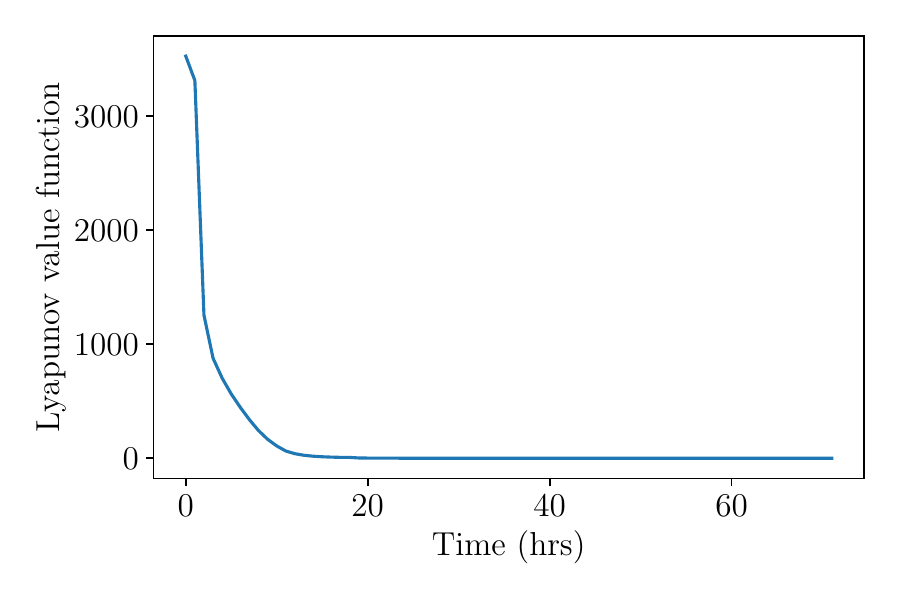}
    \caption{Lyapunov value function for a nominal E-NMPC when there are no uncertain parameters in the network}
    \label{fig:lyapunov-value-function-std-enmpc}
\end{figure}

\subsubsection{Nominal vs Multistage E-NMPC in presence of uncertain 
parameters}\label{sec: std-enmpc-unc-demands}
In this section we assume that the demand at the sink nodes is uncertain, but bounded to be within the extreme scenarios shown in Figure (\ref{fig:gaslib40-enmpc-no-unc}). Using a nominal E-NMPC, when the demands are uncertain, leads to a mismatch in the target demand vs the actual demand that is achieved in the plant model. Figure (\ref{fig:sink-flow-std-enmpc-unc}) shows the mismatch between the target demand and the actual demand achieved in the GasLib-40 network. The demand target is not met exactly for sinks 12 and 21 and is highlighted by the blue rectangle in Figure (\ref{fig:sink-flow-std-enmpc-unc}). The mismatch occurs because the controller is unaware of the uncertainty in the demands and operates assuming nominal demand. The total energy consumed is $2717.16$ \textit{MWh}. 
\begin{figure}[h]
\centering
\begin{subfigure}{.5\textwidth}
    \centering
    \includegraphics[width=1\linewidth]{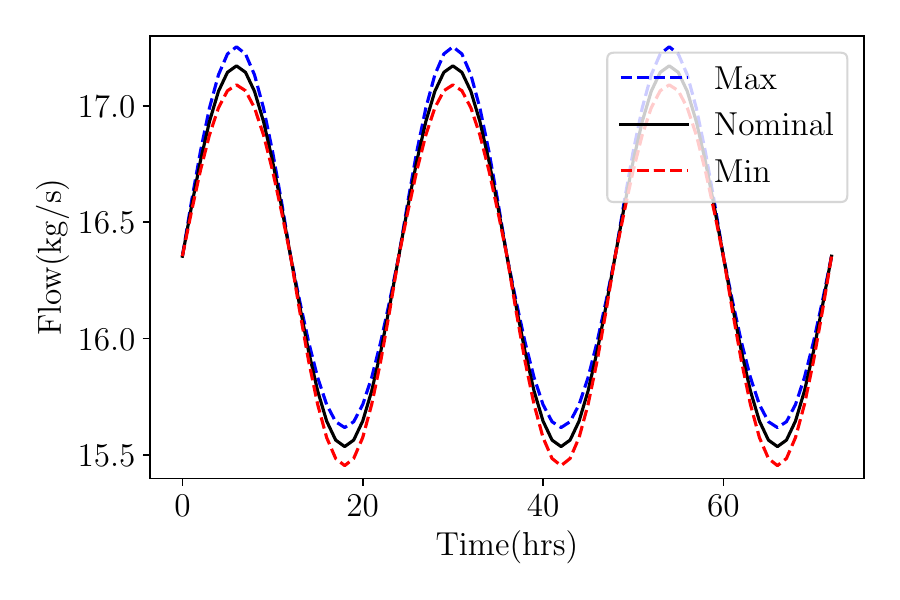}
    \caption{Extreme scenarios}
    \label{fig:gaslib40-extreme-scenarios}
\end{subfigure}%
\begin{subfigure}{.5\textwidth}
  \centering
  \includegraphics[width=1\linewidth]{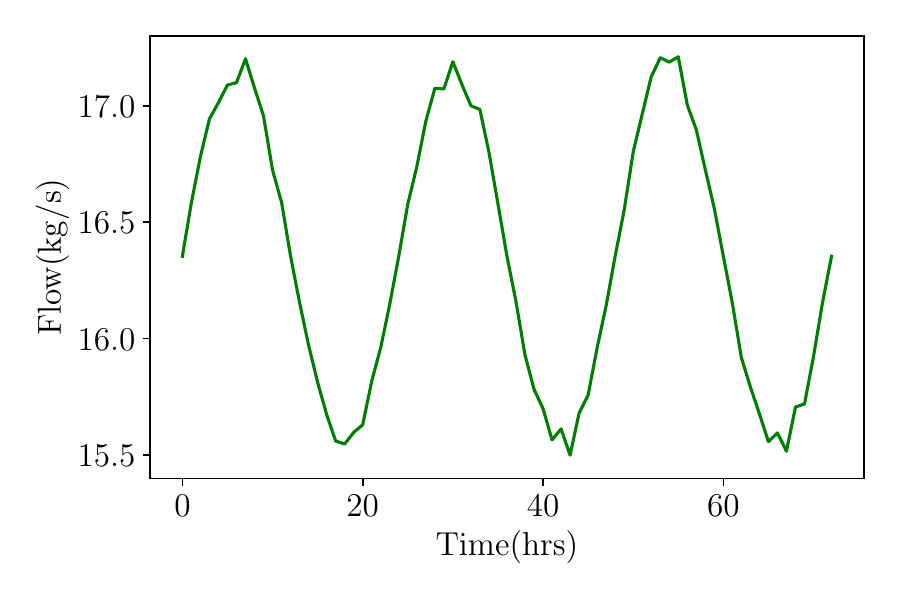}
  \caption{Target demand}
  \label{fig:gaslib-target-demand}
\end{subfigure}
\caption{Extreme scenarios of demands at sink nodes in the GasLib-40 network and the actual target demand in the GasLib-40 network}
\label{fig:gaslib40-enmpc-no-unc}
\end{figure}

\begin{figure}[h]
\centering
\begin{subfigure}{.5\textwidth}
  \centering
  \includegraphics[width=1\linewidth]{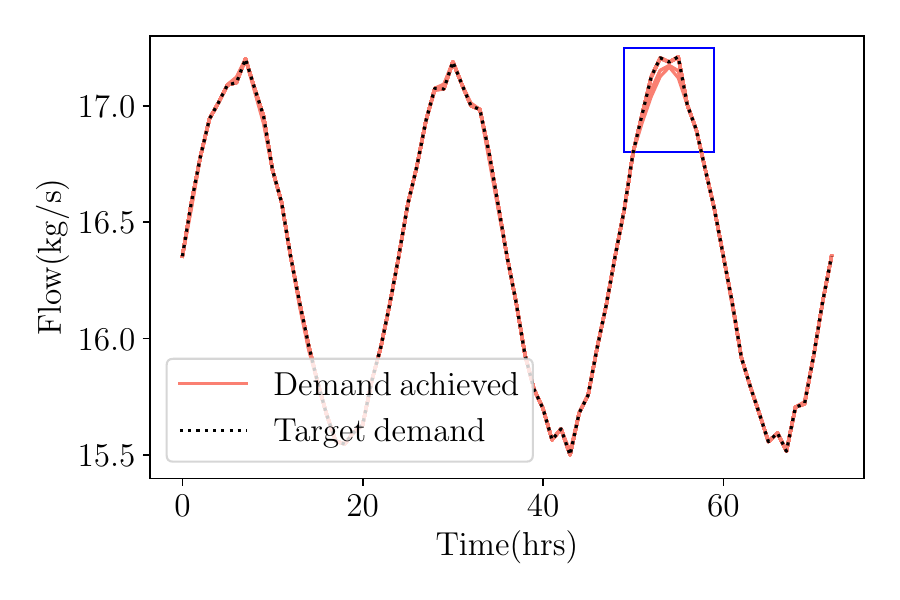}
  \caption{Nominal E-NMPC}
  \label{fig:sink-flow-std-enmpc-unc}
\end{subfigure}%
\begin{subfigure}{.5\textwidth}
  \centering
  \includegraphics[width=1\linewidth]{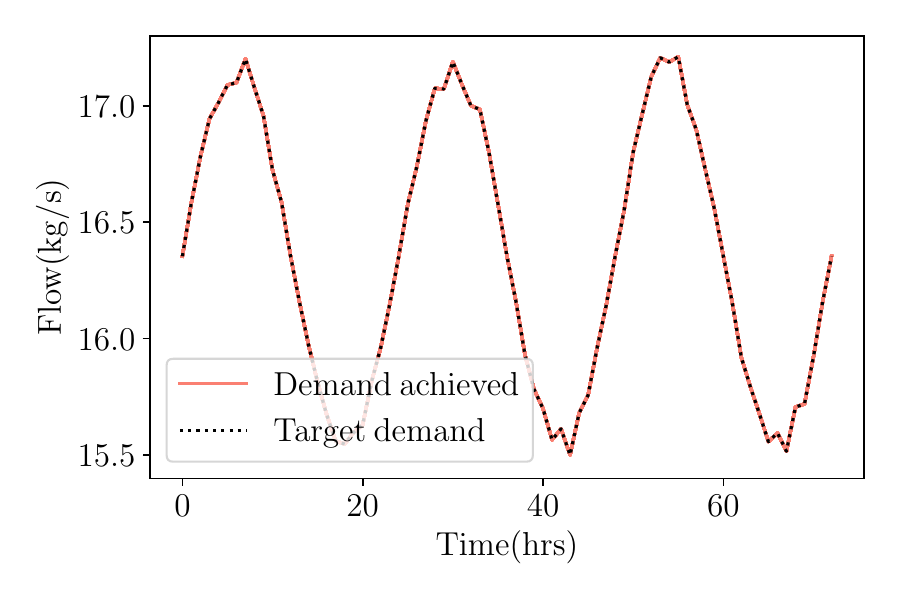}
  \caption{Multistage E-NMPC}
  \label{fig:multistage-enmpc-unc-demands}
\end{subfigure}
\caption{Gas flow rate at sink nodes when the target demand is uncertain and a nominal vs a multistage E-NMPC is used to control the system. The blue rectangle highlights the mismatch in the target and the actual demands achieved.}
\label{fig:enmpc-no-unc}
\end{figure}

Therefore, a multistage E-NMPC controller described in Section (\ref{sec:multistage-enmpc}) is used to control the GasLib-40 network with uncertain demands. Three extreme scenarios shown in Figure (\ref{fig:gaslib40-extreme-scenarios}) are explicitly embedded in the multistage controller and a robust horizon of of 1 time step is considered. The multistage controller model consists of $185129$ variables and $184645$ constraints. The size of the problem increases proportionally to the number of extreme scenarios considered in the multistage controller.

Figure (\ref{fig:multistage-enmpc-unc-demands}) shows that the demands are met exactly at all sink nodes when a multistage E-NMPC is used to control the network in the presence of uncertainties. The total electric energy consumed is $2751$ \textit{MWh}. The slight increase in energy consumption is due to the fact that the multistage E-NMPC considers extreme scenarios while making decisions to avoid constraint violations.

% \begin{figure}[h]
%     \centering
%     \includegraphics[width=0.5\linewidth]{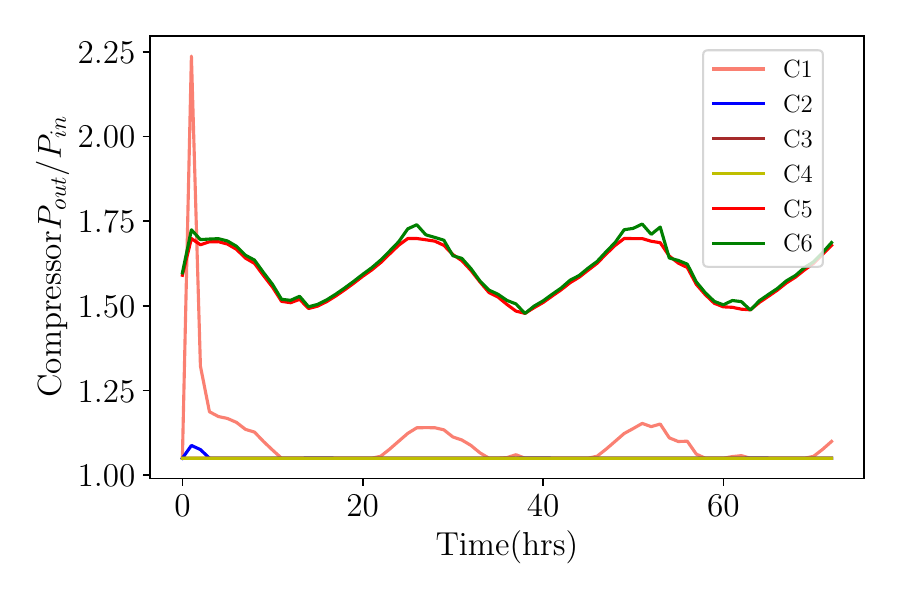}
%     \caption{Optimal compressor pressure ratios when the target demand is uncertain and a multistage E-NMPC is used to control the system.}
%     \label{fig:multistage-enmc-compressor-beta}
% \end{figure}

\begin{figure}[h]
    \centering
    \includegraphics[width=0.5\linewidth]{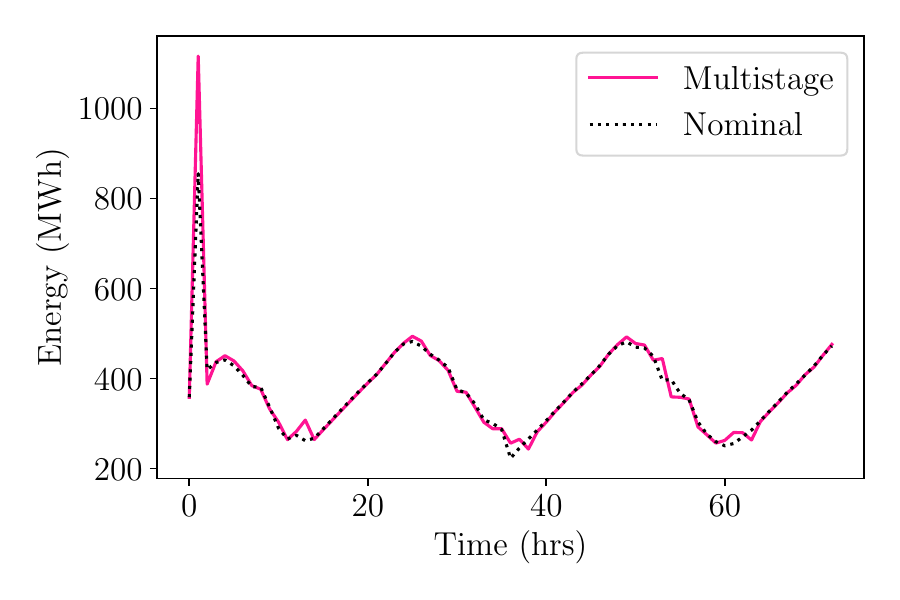}
    \caption{Energy consumption when a nominal E-NMPC is used vs when a multistage E-NMPC is used to control the GasLib-40 network with uncertain demands.}
    \label{fig:multistage-vs-standard}
\end{figure}
Figure (\ref{fig:multistage-vs-standard}) shows that multistage E-NMPC consumes more energy compared to the nominal E-NMPC when the demand is at its peak. The multistage E-NMPC takes conservative steps in order to avoid constraint violations, thus consuming more overall energy. To meet the high flow demands at sinks 12 and 21, the sources 1 and 2 supply higher flow rates of gas when the system is controlled using a multistage E-NMPC.
% \begin{figure}[h]
% \centering
% \begin{subfigure}{.5\textwidth}
%   \centering
%   \includegraphics[width=1\linewidth]{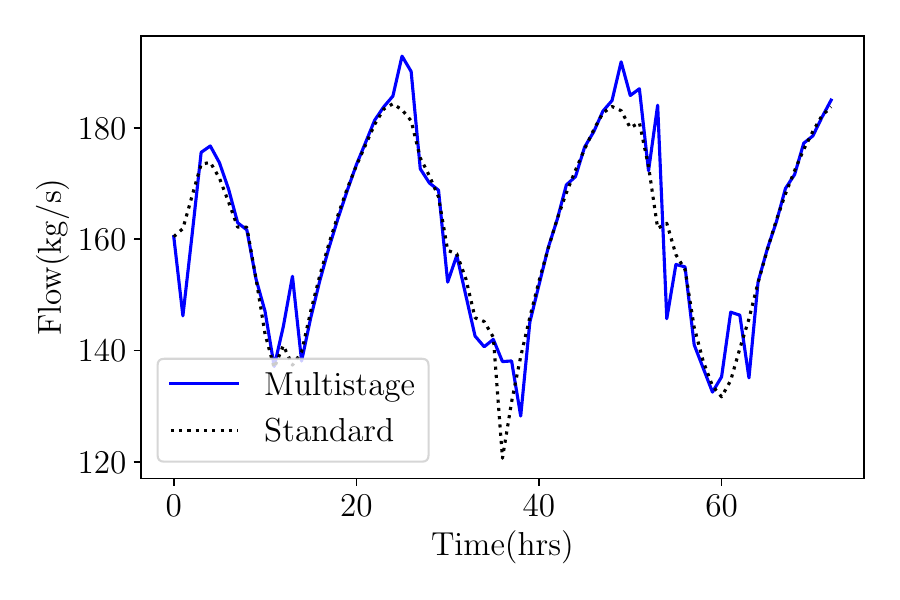}
%   \caption{Gas flow rate at source 1 (S1)}
%   \label{fig:source1-ms-vs-std}
% \end{subfigure}%
% \begin{subfigure}{.5\textwidth}
%   \centering
%   \includegraphics[width=1\linewidth]{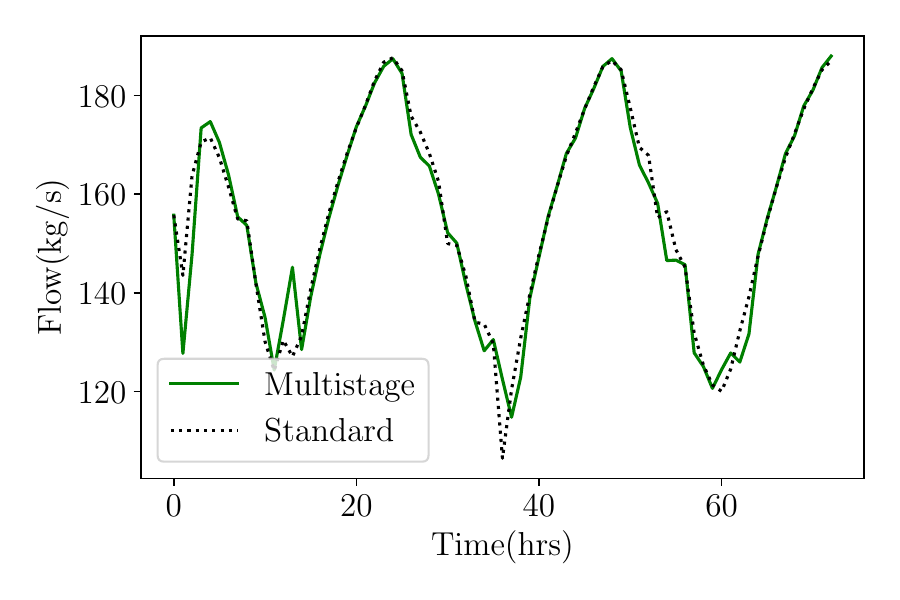}
%   \caption{Gas flow rate at source 2 (S2)}
%   \label{fig:source2-ms-vs-std}
% \end{subfigure}
% \caption{Gas flow rates at sources 1 and 2 when the gaslib-40 network is controlled by a standard E-NMPC vs a multistage E-NMPC}
% \label{fig:source-flows-ms-vs-std}
% \end{figure}

\begin{figure}[h]
\centering
\begin{subfigure}{.5\textwidth}
  \centering
  \includegraphics[width=1\linewidth]{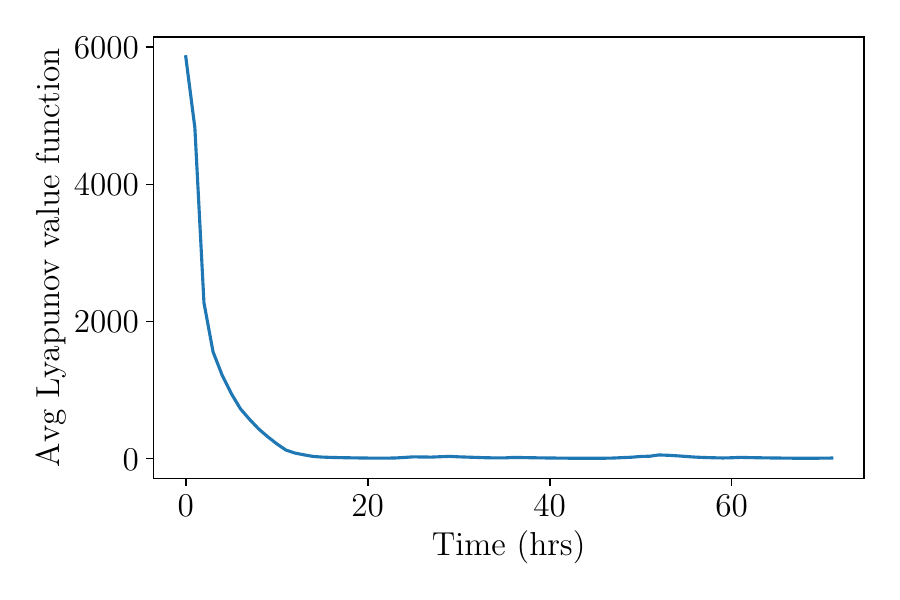}
  \caption{Lyapunov value function}
  \label{fig:multistage-lyapunov-function}
\end{subfigure}%
\begin{subfigure}{.5\textwidth}
  \centering
  \includegraphics[width=1\linewidth]{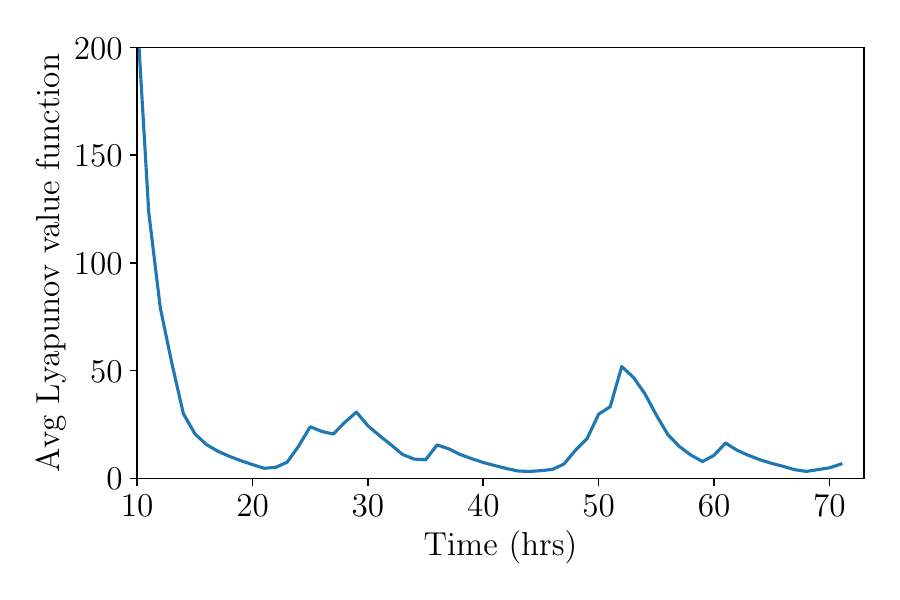}
  \caption{Zoomed in Lyapunov value function of (\ref{fig:multistage-lyapunov-function})}
  \label{fig:multistage-lyapunov-function-zoomed}
\end{subfigure}
\caption{Average Lyapunov value function when a multistage E-NMPC is used to control the GasLib-40 network in presence of demand uncertainty}
\label{fig:multistage-lyapunov-stability}
\end{figure}
The multistage E-NMPC controller has input to state practical stability. Figure (\ref{fig:multistage-lyapunov-function}) shows the expected value of the Lyapunov value function for the multistage controller has an initial descent property. However, as we move closer to the origin, Figure (\ref{fig:multistage-lyapunov-function-zoomed}) shows that the expected value of the Lyapunov value function has bounded oscillations. Due to the uncertain parameter, it is not possible to drive the system exactly to the optimal cyclic steady state. The results of the GasLib-40 network are summarized in Table \ref{tab:results-gaslib40-network}. The multistage E-NMPC takes $\approx 8$ times more time to solve than the nominal E-NMPC with uncertainty.
\begin{table}[h]
    \centering
    \caption{Results summary for the GasLib-40 network}
    \begin{tabular}{c|ccccccc}
         Controller & Uncertainty & \shortstack{Energy \\ consumption} & \shortstack{Constraint \\ violation} & Variables & Constraints & \shortstack{Average \\ solve-time}\\
         \hline\hline
         Nom E-NMPC & Absent & $2716 MWh$ & - & 61709 & 61544 & 21 s\\
         Nom E-NMPC & Present & $2717 MWh$ & Sink demands & 65943 & 63661 & 41 s\\
         MS E-NMPC & Present & $2751 MWh$ & - & 185129 & 184645 & 336 s
    \end{tabular}
    \label{tab:results-gaslib40-network}
\end{table}
\section{Conclusions}
In this paper, a multistage E-NMPC formulation is developed. Endpoint constraints are incorporated to account for the cyclic steady state of the controlled system. A Lyapunov based stability constraint is enforced to ensure the multistage E-NMPC has the ISpS property. We test the multistage E-NMPC formulation on two gas network operation problems with uncertain demands. Our results show that the multistage E-NMPC is robust to uncertainties that lie within the extreme cases embedded in the controller. The experiments also demonstrate that the controller is robustly stable, and oscillates close to its optimal cyclic steady state due to the presence of uncertain parameters. 

In the small gas network, the uncertainty in customer demands leads to a violation in pressure bound when the system is controlled using a nominal E-NMPC. The multistage E-NMPC makes conservative decisions and effectively prevents the pressure bound violation in presence of uncertain sink demands. In the GasLib-40 network, with the uncertainty in customer demands, the nominal E-NMPC does not violate pressure bounds but it instead does not respect the flow demands at nodes that are far from the suppliers. Conversely, a multistage controller can prevent the mismatch between target demand and actual gas delivered, as it considers the extreme cases in uncertain scenarios. In the test network, the multistage E-NMPC requires $3.2 \%$ more energy compared to the nominal E-NMPC while in the GasLib-40 network, the  increase in energy consumption is only $1.2 \%$, in exchange for more robust solutions. 

The stability and terminal constraints in the current form require pre-computation of the optimal cyclic steady state which can change frequently in the presence of uncertainties. Future work would focus on stabilization strategies that eliminate the need for a precomputed cyclic steady state \cite{lin_self-stabilizing_2023}.
Additionally, despite the presence of robust horizon, the multistage E-NMPC framework becomes intractable with an increase in the number of uncertain parameters and the extreme cases of each parameter. This challenge has been overcome in recent work that focuses on scenario selection through sensitivity analysis to decrease the size of the scenario tree and accelerate computations \cite{thombre_sensitivity-assisted_2021, suwartadi_sensitivity-based_2017}. The extension of the multi-stage E-NMPC framework with scenario selection also needs to be explored.
Finally, while solve times here are relatively modest compared to the 1-hour time step, with more uncertain parameters it may become necessary to accelerate solution of the control problem \cite{word2014parallel}.

\section*{Acknowledgements}
RP acknowledges financial support from the Center for Nonlinear Studies at
Los Alamos National Laboratory. Approved for unlimited release. LA-UR-25-22063.

\newpage
\bibliography{references}

\end{document}